\DeclareMathOperator{\diag}{diag}
\begin{document}

\biboptions{longnamesfirst}
\newtheorem{thm}{Theorem}
\newtheorem{cl}{Corollary}
\newtheorem{prop}{Proposition}
\newtheorem{lm}{Lemma}
\newdefinition{df}{Definition}
\newdefinition{ex}{Example}
\newdefinition{as}{Assumption}
\newdefinition{pr}{Property}
\newdefinition{rs}{Restriction}
\newdefinition{al}{Algorithm}

\definecolor{myb}{rgb}{0.13, 0.67, 0.8}
\journal{}

\begin{frontmatter}

\title{Time-Varying Identification of Monetary Policy Shocks}


\author[eur]{Annika Camehl\corref{cor1}}
\author[um]{Tomasz Wo\'zniak\corref{cor2}}

\cortext[cor1]{\textit{Corresponding author}: Annika Camehl, Erasmus University Rotterdam, Department of Econometrics, Burgemeester Oudlaan 50, 3062 PA Rotterdam, The Netherlands, \textit{email:} \href{mailto:camehl@ese.eur.nl}{camehl@ese.eur.nl}.\\[1ex]
\copyright{} 2024 Annika Camehl \& Tomasz Wo\'zniak}
%
\address[eur]{Erasmus University Rotterdam}
\address[um]{University of Melbourne}

\begin{abstract}
We propose a new Bayesian heteroskedastic Markov-switching structural vector autoregression with data-driven time-varying identification. 
The model selects alternative exclusion restrictions within regimes and, as a condition for the search, allows to verify identification through heteroskedasticity. 
We show that US data support time variation in US monetary policy shock identification.
In the sample-dominating first regime, systematic monetary policy follows a Taylor rule extended by the term spread, effectively curbing inflation. 
In the second regime, gaining more persistence after the global financial and COVID crises, it is characterized by a money-augmented Taylor rule, providing economic stimulus, and featuring the liquidity effect.
\end{abstract}

\begin{keyword}\normalsize
Structural VARs \sep Markov Switching\sep	Identification Via Heteroskedasticity \sep Extended Taylor Rule \sep Effects of Monetary Policy \\
	\textit{JEL classification:} 
	C11, 
	C32, 
	E52 
\end{keyword}

\end{frontmatter}

\newpage
\section{Introduction}\label{sec:Introduction}

\noindent The targeting strategy of the Federal Reserve (Fed) to reach its dual mandate of price stability and economic output has evolved over the last sixty years. The Fed moved in the 1980s from acting via adjustments of the money supply towards inflation-targeting interest rate policy, adopted a new unconventional toolkit when the federal funds rate reached its effective zero lower bound in December 2008, and faced changes in the regulatory environment or the importance of the financial sector. Despite these policy changes, empirical studies on the effects of monetary policy analyzing variables' responses to exogenous shifts to the monetary policy rule in structural vector autoregressions assume that restrictions used to identify the structural shocks do not change over time following the setup proposed by \cite{sims1980}. This time-invariance in identification is imposed in the literature regardless of the identification strategy implemented or time variation in the parameter values \citep[see, e.g.,][]{Primiceri2005,Sims2006}. However, such time-invariant identification strategies might be at odds with data and reality. We relax this time-invariance assumption implementing time-varying identification via a~data-driven search mechanism.

To that end, we develop a novel model, a Bayesian Markov-switching (MS) structural vector autoregression (SVAR) with time-varying identification (TVI) and heteroskedasticity following stochastic volatility (SV). It automatically selects from alternative patterns of exclusion restrictions identifying monetary policy shocks via a~data-driven search process within each MS regime. We introduce a new prior distribution to estimate a TVI indicator altering the identification pattern in each regime. Consequently, the structural parameters might take different values and different exclusion restrictions might be selected in each of the MS states. No identifying pattern is associated with a specific regime. Hence, we do not exclude the possibility that identification of the monetary policy shock is time-invariant even if structural parameters vary with time. Instead, we let the data provide evidence in our search mechanism on which exclusion restrictions are supported within a regime. 

Our regime-dependent selection from a range of alternative exclusion restrictions is valid thanks to identification via heteroskedasticity. Our structural matrix follows a slow-moving MS process. The structural shocks' variances instead evolve at a higher frequency according to an SV process, which is controlled by a volatility of the log-volatility parameter following MS. This higher frequency enables that a shock is heteroskedastic within a regime. Thereby, the shock can be identified via heteroskedasticity within a regime rendering the exclusion restrictions overidentifying, which allows discriminating among these restrictions in our data-driven search. Additionally, the identifying restrictions provide economic labelling of the shocks identified via heteroskedasticity. We verify identification via heteroskedasticity in each regime by checking that the regime-dependent volatility of the log-volatility parameter does not imply homoskedastic shocks within a regime. Finally, we boost the model's flexibility by applying a three-level equation-specific hierarchical prior distribution to the autoregressive and structural parameters, enabling prior shrinkage estimation.

We select the monetary policy shock identifying patterns from alternatives motivated by theoretical and empirical SVAR literature. The first such pattern is the Taylor Rule proposed by \cite{Taylor1993}, in which interest rates are related to inflation and output fluctuations, and which was employed in the seminal SVAR paper by \cite{leeperWhatDoesMonetary1996a}. Another two monetary policy rules extend the Taylor Rule by the term spread as in \cite{Vazquez2013} and \cite{diebold2006}, or money as suggested by \cite{Andres2006} and \cite{BelongiaIreland2015}. As the last alternative in our benchmark model, we consider a simple money-interest rate rule used in \cite{LeeperRoush2003} and \cite{Sims2006}. We also report results where the Taylor Rule is extended by credit spread as in \cite{Curdia2010}, or stock prices as in \cite{Fuhrer2008}. 

For each regime, we estimate the probability for all rules by our TVI mechanism. In order to verify which of the policy rules are supported by the data, we use six US monthly time series from January 1960 to June 2023, including industrial production, consumer price index inflation, federal funds rate, term spread, M3 money, and stock prices. We allow for two MS states as our data do not favor a larger number of regimes.

We show that data strongly support time variation in US monetary policy shock identification. In the first regime, which dominates before 2000 and in more tranquil economic conditions, data robustly opt for including term spreads in the Taylor Rule in the spirit of \cite{diebold2006}. This suggests that conventional monetary policy is implemented giving consideration to the slope of the bond yield curve. It is also effective in curbing inflation, including that on the stock exchange, and features the liquidity effect. An expansionary monetary policy shock flattens the bond yield curve in this regime. 

In contrast, in the second regime occurring mainly after 2000, predominately during the global financial crisis (GFC) and COVID-19 pandemic and during periods of unconventional monetary policy actions, data support a monetary policy rule augmented by the monetary aggregate as advocated by \cite{BelongiaIreland2015}. A~monetary policy shock lowering interest rates provides a substantial stimulus to  economic activity, boosts money supply much stronger than in the first regime, and steepens the bond yield curve. Additionally, it is complemented by a pure term spread shock that moves the term spread not affecting interest rates simultaneously as defined by \cite{Baumeister2013}. We show that if monetary policy as implemented in the second regime did not occur, inflation would be on average 1 percentage point (pp) higher and the interest rate would follow the shadow rate path after 2008.

Our results clearly indicate that monetary policy provided a cushioning effect in the aftermath of sharp crises by boosting economic activity. This came with over half-year delay and at a cost of a slight increase in inflation. Our interpretations would not hold in the absence of TVI which sharpens identification and addresses model misspecifications that lead to the price puzzle. We provide evidence for partial identification of the monetary policy shock via heteroskedasticity validating the regime-dependent TVI.

To the best of our knowledge, we are the first to introduce time-varying identification via a data-driven search. Only a few previous papers adopt regime dependent identification patterns, albeit setting restrictions \textit{a priori} without further validation. 
\cite{KimuraNakajima2016} analyze changes in Japan's monetary policy using two exogenously given regimes, an interest rate or bank reserves rule, embedded within a latent threshold time-varying parameter SVAR.
\cite{Bacchiocchi2017} impose two identification patterns with an additive relationship in a heteroskedastic SVAR and apply them to study the effects of monetary policy. They report differences in impulse responses during the Great Moderation compared to previous periods. 
\cite{Arias2024} impose sign restrictions on impulse responses to identify US monetary policy shocks and add restrictions on contemporaneous structural parameters in \textit{a priori}-defined periods when the Fed used the federal funds rates as the main policy instrument. With our search mechanism, we challenge the common assumption of time-invariant US monetary policy shock identification and we estimate the probabilities of regime occurrences. Moreover, we contribute to the discussion on appropriate exclusion restrictions to identify US monetary policy shocks by including a range of various patterns in our search mechanism. 

In the following, we introduce our new model in Section \ref{sec:Model}, report the empirical evidence on time-varying identification in Section \ref{sec:resultsTVI}, and discuss the effects of US monetary policy shocks in Section \ref{sec:resultsIRF}.

\section{Time-varying identification in SVARs}\label{sec:Model}

\noindent In this section, we introduce our heteroskedastic SVAR with MS time variation and TVI in the structural matrix, and the four monetary policy rules TVI selects from in each regime. We discuss the new prior distribution facilitating the data-dependent search of time-varying exclusion restrictions. Finally, we explain identification via heteroskedasticity, based on which we achieve that the exclusion restrictions are over-identifying, and the priors for all parameter groups.

\subsection{Model specification}\label{subsec:Modelspecification}

\noindent To study time variation in the identification of US monetary policy shocks, we estimate a structural vector autoregressive model:
\begin{align}
	\mathbf{y}_t &= \sum_{l=1}^{p} \mathbf{A}_l \mathbf{y}_{t-l} + \mathbf{A}_d \mathbf{d}_t + \boldsymbol{\varepsilon}_t  \label{eq:rf} \\
\mathbf{B}\left(s_t, \boldsymbol{\kappa}(s_t)\right) \boldsymbol{\varepsilon}_t &= \mathbf{u}_t \label{eq:sf}
\end{align}
where the vector $\mathbf{y}_t$ of $N=6$ dependent variables at time $t$ follows a vector autoregressive model in equation~\eqref{eq:rf} with $p=6$ lags, an $d=1$ deterministic term $\mathbf{d}_t$ that we set to a constant term, an error term $\boldsymbol{\varepsilon}_t$, and $N\times N$ autoregressive matrices $\mathbf{A}_l$, for lag $l = 1, \dots, p$, and the $N\times d$ matrix $\mathbf{A}_d$ of deterministic term slopes. The structural equation~\eqref{eq:sf} links the reduced form residuals to the structural shocks $\mathbf{u}_t$ in a time-varying linear relationship via the $N\times N$ structural matrix $\mathbf{B}\left(s_t, \boldsymbol{\kappa}(s_t)\right)$.

The system contains monthly data on industrial production, denoted by $y_t$, consumer price index inflation, $\pi_t$, federal funds rates, $R_t$, term spread, $TS_t$, measured as the 10-year treasury constant maturity rate minus the federal funds rate, M2 money supply, $m_t$,  and the S\&P500 stock price index, $sp_t$. 
$R_t$ and $TS_t$ are expressed in annual terms, as is $\pi_t$ computed as the logarithmic rates of returns on the consumer price index expressed in percent. $y_t$, $m_t$, and $sp_t$ are taken in log-levels multiplied by 100.\footnote{
Series $R_t$, $y_t$, and  $m_t$ are downloaded from \cite{FFR,IP,M} respectively, $\pi_t$ is provided by \cite{CPI}, $TS_t$, by \cite{TS}, and $sp_t$ by \cite{SP}. Additionally, we used a series of shadow rates by \cite{Wu2016} downloaded from \url{https://sites.google.com/view/jingcynthiawu/shadow-rates} [accessed 15.08.2023].}
These variables allow us to specify various identification patterns for monetary policy shocks and to study the impact of unexpected monetary expansion on the economic environment, monetary measures, and financial indicators. 

The crucial feature of our model is the relationship between the reduced- and structural-form shocks in equation~\eqref{eq:sf} captured by the structural matrix that is regime-dependent and, hence, the additional restrictions to identify structural shocks imposed on the contemporaneous relations as well as the parameter estimates can vary over time. The structural matrix  $\mathbf{B}\left(s_t, \boldsymbol{\kappa}(s_t)\right)$ depends on a regime indicator $s_t$ of a discrete Markov process with $M$ states and on a regime-specific collection of TVI indicators, $\boldsymbol{\kappa}(s_t)$. The TVI indicator selects amongst $K$ patterns of the exclusion restrictions imposed on the monetary policy equation for each regime. Thus, we can select structural shocks' identification patterns which are specific to the regimes identified by the MS process. We do not associate \textit{a priori} an identifying pattern with a specific regime but estimate for each of the four exclusion restriction schemes the occurrence probability within each regime. In effect, a MS model without TVI might be estimated and the the TVI will only occur if data supports it. Moreover, irrespective of TVI, MS in the structural coefficients enables that impulse responses are time-varying as in \cite{hamilton2016}, which  \cite{LutkepohSchlaak2022} find to be important in heteroskedastic models.  

We assume that the MS process, $s_t$, by \cite{hamilton1989} is stationary, aperiodic, and irreducible, with a $M\times M$ transition matrix $\mathbf{P}$ and an $N$-vector of initial values $\boldsymbol{\pi}_0$, denoted by $s_t \sim\mathcal{M}arkov(\mathbf{P}, \boldsymbol{\pi}_0)$.

The data-driven selection among the exclusion restriction patterns is possible if these are overidenifying. We achieve this by identification via heteroskedaticity. To that end, our model includes heteroskedastic structural shocks that follow SV. We specify that the structural shocks at time $t$ are contemporaneously and temporarily uncorrelated and jointly conditionally normally distributed given the past observations on vector $\mathbf{y}_t$, denoted by $\mathbf{Y}_{t-1}$, with zero mean and a diagonal covariance matrix:
\begin{align}
\mathbf{u}_t \mid \mathbf{Y}_{t-1} &\sim\mathcal{N}_T\left(\mathbf{0}_N, \diag\left(\boldsymbol{\sigma}_t^2\right)\right) \label{eq:sfshock}
\end{align}
where $\boldsymbol{\sigma}_t^2$ is an $N$-vector of structural shocks' conditional variances at time $t$ filling in the main diagonal of the covariance matrix, and $\mathbf{0}_N$ is a vector of $N$ zeros. The third element in $\mathbf{u}_t$ is the monetary policy shock that is heteroskedastic.

\subsection{Four monetary policy rules}\label{sec:mpr}

\noindent Our selection of identifying patterns chooses from four ($K=4$) monetary policy rules specified for the third equation and, thus, the third row of the structural matrix in each of the MS regimes. Our restriction patterns embed four policy rule specifications, shown in Table~\ref{tab:Bidentrest}. Exclusion restrictions are widely used to identify structural shocks due to their ease of use \citep[e.g.,][impose various exclusion restrictions to identify the monetary policy shock]{leeperWhatDoesMonetary1996a,Primiceri2005,Sims2006,Wu2016}. In the context of statistical identification procedures, such as via heteroskedasticity or non-normality, exclusion restrictions provide the necessary economic assumptions to label the structural shocks assigning them specific interpretation.

\begin{table}[h!] 
	\caption{Monetary policy shocks identification patterns used in  regime-specific time-varying identification}\label{tab:Bidentrest} \centering
 
	\begin{tabular}{lrcccccc} 
	&	&$y_t$ & $\pi_t$ & $R_t$ & $TS_t$ & $m_t$ &  $sp_t$ \\\toprule	  
	\textit{TR} &	Taylor rule &$*$ &$*$ & $*$ & 0 & 0 & 0  \\\cline{3-8}
	\textit{TR with TS} &	Taylor rule with term spread &$*$ &$*$ & $*$ & $*$ & 0 & 0  \\\cline{3-8}
	\textit{TR with m} &	Taylor rule with money &$*$ &$*$ & $*$ & 0 &$*$ & 0  \\\cline{3-8}	
	\textit{MIR} &	Money-interest rate rule &0 &0 & $*$ & 0 &$*$ & 0  \\\bottomrule
	\end{tabular}\\[0.2cm]	
\begin{flushleft}
		{\footnotesize Note: $*$ indicates an unrestricted parameter and $0$ an exclusion restriction }
\end{flushleft}
\end{table}

We denote the monetary policy reaction function depicting the standard empirical Taylor rule (TR) \citep{Taylor1993}, in which short-term interest rates react contemporaneously only to movements in prices and output, by \textit{TR}. 
Another identification pattern that extends the TR by the term spread, $TS_t$ is labeled \textit{TR with TS}. 
We consider this policy rule in response to the monetary policy shock identification strategies applied by \cite{Baumeister2013}, \cite{Feldkircher2016}, and \cite{Liu2017} but emphasise the spread's role in determining the interest rates.
The additional estimated parameter captures the yield curve's slope effect  and explores participants' future expectations in the bond market \citep{Nimark2008,Tillmann2020}. 
Moreover, a~shift in the term spread is an early indicator of business cycle fluctuations to which the central bank should react by adjusting interest rates \citep[see][]{Rudebusch2008,Ang2011,Vazquez2013}.

A version of the TR that contains no exclusion restriction on the contemporaneous reaction to money is inspired by \citep{BelongiaIreland2015} and labeled \textit{TR with m}.
Here, the data reflects interest rate setting which targets money and can arise due to unconventional monetary policy actions, such as quantitative easing, that have been shown to lead to a systematic expansion of the money supply. 
Including money in the monetary policy reaction function suggests simultaneity in interest rates and market liquidity levels. 
Finally, \cite{BelongiaIreland2015} and \cite{LutkepohlWozniak2017} found empirical support for money in policy rules in SVARs without time variation in the parameters.

The fourth identification pattern for the monetary policy reaction function depicts a money-interest rate rule and is labeled \textit{MIR}. It highlights the simultaneity between short-term interest rates and a monetary aggregate. Allowing money to enter the monetary policy reaction function contemporaneously is in line with the identification used by \cite{LeeperRoush2003} and  \cite{Sims2006}. 

Examples of such generalized TRs can also be found in dynamic general equilibrium or New Keynesian models. In those theoretical models, the monetary authority changes interest rates additionally in response to term spreads \citep{Vazquez2013}, growth in money \citep{Andres2006}, or changes in credit spread and aggregated credit \citep{Curdia2010}, or stock prices \citep{Fuhrer2008}. Likewise, \cite{Gertler2011} characterize monetary policy by a standard Taylor rule in normal times, but when the credit spread rises sharply, they allow the central bank to react to credit policy or implement large-scale asset purchases. Related to these generalized TR specifications, theoretical and empirical multi-country models encompass monetary policy reaction functions augmented by exchange rate movements \citep[e.g.,][]{LubikSchorfheide2007,Camehl2023}.  

We impose a lower-triangular structure on the remaining rows of the structural matrix, which identifies them based on the exclusion restrictions only. Hence, for these rows we do not require to verify identification via heteroskedasticity. Note that as long as patterns \textit{TR} and \textit{MIR} lead to identified systems, monetary rules \textit{TR with TS} and \textit{TR with m} do not satisfy the conditions for global or partial identification using exclusion restrictions due to insufficient number of imposed restrictions.   

\subsection{Time-varying identification}\label{sec:TVI}

\noindent To search for changes in the identification pattern of US monetary policy shocks across regimes, we introduce the TVI mechanism and implement it via a new prior distribution. For simplicity, denote by $\mathbf{B}_{m.k}$ the matrix $\mathbf{B}\left(s_t, \boldsymbol{\kappa}(s_t)\right)$ for given realisations of the MS process $s_t=m$, where $m$ denotes one of the $M$ regimes, and the TVI indicator $\boldsymbol{\kappa}(m)=k$, where $k$ stands for one out of four exclusion restrictions pattern. Following \cite{WaggonerZha2003}, we decompose the monetary policy reaction function, that is, the third row of the structural matrix, $[\mathbf{B}_{m.k}]_{n\cdot}$ for $n=3$, into a $1\times r_{n.m.k}$ vector $\mathbf{b}_{n.m.k}$, collecting the unrestricted elements to be estimated, and an $r_{n.m.k}\times N$ matrix $\mathbf{V}_{n.m.k}$, containing zeros and ones placing the elements of $\mathbf{b}_{n.m.k}$ at the appropriate spots, $[\mathbf{B}_{m.k}]_{n\cdot} = \mathbf{b}_{n.m.k} \mathbf{V}_{n.m.k}$,
where $r_{n.m.k}$ is the number of parameters to be estimated for the restriction pattern $k$. Hence, the restriction matrix $\mathbf{V}_{n.m.k}$ contains the information on which parameters are estimated and on which an exclusion restriction is imposed for regime $m$ and restriction pattern $k$. We pre-specify the four different restriction patterns on the monetary policy rule to identify the monetary policy shock as given in Table~\ref{tab:Bidentrest} while imposing a lower-triangular structure on the remaining rows of the structural matrix.

Statistical selection from the four monetary policy rules in each of the regimes is facilitated by a hierarchical prior distribution. Given the fixed regime $s_t = m$ and TVI component indicator, $\boldsymbol{\kappa}(m)=k$, we set a 3-level local-global hierarchical prior on the unrestricted elements $\mathbf{b}_{n.m.k}$. We assume that  $\mathbf{b}_{n.m.k}$ is zero-mean normally distributed with an estimated shrinkage. This equation-specific level of shrinkage for $\gamma_{B.n}$ follows an inverted gamma 2 distribution with scale $\underline{s}_{B.n}$ and shape $\underline{\nu}_B$. The former hyper-parameter has a local-global hierarchical gamma-inverse gamma 2 prior, with the global level of shrinkage determined by $\underline{s}_{\gamma_B}$:
\begin{align}
\mathbf{b}_{n.m.k}' &\mid \gamma_{B.n}, \boldsymbol{\kappa}(m)=k \sim\mathcal{N}_{r_{n.m.k}}\left(\mathbf{0}_{r_{n.m.k}}, \gamma_{B.n} \mathbf{I}_{r_{n.m.k}}\right),  \qquad \text{with}\label{eq:priorbb}\\
\gamma_{B.n}\mid \underline{s}_{B.n} &\sim\mathcal{IG}2\left( \underline{s}_{B.n}, \underline{\nu}_B \right),  \underline{s}_{B.n} \mid \underline{s}_{\gamma_B} \sim\mathcal{G}\left( \underline{s}_{\gamma_B}, \underline{\nu}_{\gamma_B} \right),  \underline{s}_{\gamma_B} \sim\mathcal{IG}2\left( \underline{s}_{s_B}, \underline{\nu}_{s_B} \right),\label{eq:priorb}
\end{align} 
where $\mathbf{I}_{r_{n.m.k}}$ is the identity matrix. The subscript $n$ denotes an equation-specific parameter. The equation-specific hierarchical prior guarantees high flexibility and avoids arbitrary choices as the level of shrinkage is estimated within the model. We set $\underline{\nu}_B=10, \underline{\nu}_{\gamma_B}=10 ,  \underline{s}_{s_B}=100,$ and $\underline{\nu}_{s_B}= 1$ to allow a wide range of values for the structural matrix elements and show extraordinary robustness of our results that applies as long as the shrinkage towards the zero prior mean is not too imposing. This conditional normal prior is equivalent to that by \cite{WaggonerZha2003} for a time-invariant model and to a variant of the generalized--normal prior of \cite{arias2018inference}.  We combine \eqref{eq:priorbb} with a multinomial prior distribution for the TVI component indicator $\boldsymbol{\kappa}(m) = k \in\{1,\dots,K\}$ with flat probabilities equal to $\frac{1}{K}$, denoted by $\boldsymbol{\kappa}(m) \sim\mathcal{M}ultinomial\left(K^{-1},\dots,K^{-1}\right)$.
This prior setup allows us to estimate posterior probabilities for each TVI component in each regime and provides data-driven evidence in favour of the underlying identification pattern. The indiscriminate multinomial prior reflects our agnostic view of which policy rule applies.

Our TVI prior specification is a multi-component generalisation of \cite{geweke1996variable}'s spike-and-slab prior. It is closely related to the stochastic search variable selection \citep[see the seminal paper by][]{George1997}, which we generalize by allowing for multiple components and by the treatment of the restricted component. In line with a spike-and-slap prior, we can write the prior distribution for an $(n,i)^{\text{th}}$ element of the regime-specific matrix $\mathbf{B}(m, \boldsymbol{\kappa}(m))$, denoted by $[\mathbf{B}_{m}]_{n.i}$, marginalized over $\boldsymbol{\kappa}(m)$ as
\begin{align}
	[\mathbf{B}_{m}]_{n.i} \mid \gamma_{B.n} &\sim \frac{K - K_{R}}{K}\mathcal{N}(0, \gamma_B) + \frac{K_{R}}{K}\delta_0.
\end{align}
The element $[\mathbf{B}_{m}]_{n.i}$ is restricted to zero in $K_{R}$ components, whereas it stays unrestricted and normally distributed in $K-K_{R}$ of them. Therefore, its prior distribution is a Dirac mass at value zero, $\delta_0$, with probability $\frac{K_{R}}{K}$, and normal with probability $\frac{K - K_{R}}{K}$. 

To estimate the parameters of the structural model, we use a Gibbs sampler.  Having obtained $S$ posterior draws, we can easily calculate which set of exclusion restrictions is best supported by the data by estimating the posterior probability for each restriction pattern:
\begin{align}
	\widehat\Pr\left[\kappa(m)= k\mid \mathbf{Y}_T\right] = S^{-1}\sum_{s=1}^{S}\mathcal{I}(\kappa(m)^{(s)}=k).
\end{align} 
The probability is estimated by a fraction of posterior draws of the TVI component indicator, $\kappa(m)^{(s)}$, for which it takes a particular value $k$, for each of its values from 1 to $K$. 

In general, the TVI mechanism can also be applied to multiple rows in the structural matrix and, hence, can search for TVI of multiple structural shocks making it adaptable to a broad set of applications.  It is then straightforward to calculate the conditional posterior distribution of the $n\textsuperscript{th}$ equation specification given a particular specification of the $i\textsuperscript{th}$ row. Also, if a specific economic interpretation is given to a particular combination of exclusion restrictions for the structural matrix, one can estimate the joint posterior probability of such a specification by computing the fraction of the posterior draws for which this combination of TVI indicators holds.

\subsection{Identification via heteroskedasticity} \label{sec:identheteroskedasticity}

\noindent Our search of the identification pattern is feasible only if data can distinguish between the monetary policy rules. Our TVI mechanism in the third row of the structural matrix requires that the third structural shock, which we label as the monetary policy shock based on the exclusion restrictions, is identified within each regime, absent the potential additional restrictions we are searching for. 
We partially identify the monetary policy shock and the third row of the structural matrix within a regime via heteroskedasticity. The partial identification via heteroskedasticity focuses on the identification of a specific shock and all parameters of the corresponding row of the structural matrix within a regime without the necessity of imposing additional restrictions.
The feasibility of identification via heteroskedasticity within regimes relies on the volatility of the structural shocks evolving at a higher frequency than the structural matrix following MS with persistent states -- an essential feature of our model. Once the identification of this shock via heteroskedasticity is verified in each of the regimes, any set of exclusion restrictions overidentify the shock enabling the data to discriminate between the alternative policy rules.

Partial identification of the $n^{\text{th}}$ row of the regime-specific structural matrix, denoted by $\mathbf{B}_m$, where the dependence on $k$ is neglected, within each regime is obtained under the following corollary. 
\begin{cl} \label{cl1}
Consider the SVAR model with the structural matrix following MS from equations \eqref{eq:rf}--\eqref{eq:sfshock}. Denote by $t_1^{(m)}, \dots, t_{T_M}^{(m)}$ the time subscripts of the $T_m$ observations allocated to the Markov-switching regime $s_t = m \in \{1,\dots,M\}$. Let $\boldsymbol{\sigma}_{n.m}^2 = (\sigma_{n.t_1^{(m)}}^2,\dots,\sigma_{n.t_{T_M}^{(m)}}^2)$ be the vector containing all conditional variances $\sigma_{n.t}^2$ of the $n^{\text{th}}$ row associated with the observations in the $m\textsuperscript{th}$ regime. Then the $n^{\text{th}}$  row of $\mathbf{B}_m$ is identified up to a sign if 
$\boldsymbol{\sigma}_{n.m}^2 \neq \boldsymbol{\sigma}_{i.m}^2$ for all $i \in \{1, \dots, N\}\setminus \{n\}$.
\end{cl}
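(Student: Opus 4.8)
The plan is to reduce Corollary~\ref{cl1} to the standard identification-via-heteroskedasticity result for heteroskedastic SVARs, adapted to the partial (single-row, single-regime) setting. The core object is the regime-$m$ restriction of the model: conditionally on $s_t=m$, equations~\eqref{eq:rf}--\eqref{eq:sfshock} reduce to a fixed structural matrix $\mathbf{B}_m$ together with heteroskedastic shocks whose conditional variances $\boldsymbol{\sigma}_{n.t}^2$ vary across the $T_m$ observations indexed by $t_1^{(m)},\dots,t_{T_M}^{(m)}$. First I would write down the reduced-form conditional covariance of $\boldsymbol{\varepsilon}_t$ within regime $m$, namely $\boldsymbol{\Sigma}_t = \mathbf{B}_m^{-1}\diag(\boldsymbol{\sigma}_t^2)\mathbf{B}_m^{-\prime}$, and note that these covariances are identified from the data. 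The task is then to show that the heteroskedasticity condition $\boldsymbol{\sigma}_{n.m}^2 \neq \boldsymbol{\sigma}_{i.m}^2$ for every $i\neq n$ pins down the $n^{\text{th}}$ row of $\mathbf{B}_m$ up to sign.

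Next I would set up the usual observational-equivalence argument. Suppose $(\mathbf{B}_m,\{\boldsymbol{\sigma}_t^2\})$ and $(\tilde{\mathbf{B}}_m,\{\tilde{\boldsymbol{\sigma}}_t^2\})$ generate identical reduced-form covariances $\boldsymbol{\Sigma}_t$ for all $t$ in the regime. Writing $\mathbf{Q} = \tilde{\mathbf{B}}_m \mathbf{B}_m^{-1}$, the equality of covariances across all in-regime dates forces $\mathbf{Q}\,\diag(\boldsymbol{\sigma}_t^2)\,\mathbf{Q}' = \diag(\tilde{\boldsymbol{\sigma}}_t^2)$ to hold simultaneously for every $t\in\{t_1^{(m)},\dots,t_{T_M}^{(m)}\}$. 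The key algebraic step is that a single matrix $\mathbf{Q}$ diagonalizing a whole family of diagonal matrices with genuinely differing diagonals must itself be, up to column scaling, a signed permutation. I would make this precise by taking two in-regime dates with distinct variance vectors and examining the off-diagonal consequences of $\mathbf{Q}\diag(\boldsymbol{\sigma}_t^2)\mathbf{Q}' $ being diagonal for both: row $n$ of $\mathbf{Q}$ can have a second nonzero entry in column $i$ only if $\sigma_{n.t}^2$ and $\sigma_{i.t}^2$ move in fixed proportion across all in-regime dates, which is exactly what the hypothesis $\boldsymbol{\sigma}_{n.m}^2\neq\boldsymbol{\sigma}_{i.m}^2$ excludes for each $i\neq n$. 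Hence row $n$ of $\mathbf{Q}$ has a single nonzero entry, so the corresponding row of $\tilde{\mathbf{B}}_m$ is proportional to the $n^{\text{th}}$ row of $\mathbf{B}_m$, and matching the implied variance fixes the scale to $\pm 1$.

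Because only the $n^{\text{th}}$ row is claimed, I would emphasize that this is a \emph{partial} identification statement: I do not need $\mathbf{Q}$ to be globally a signed permutation, only that its $n^{\text{th}}$ row be a signed unit vector, which follows from the condition being imposed on the $n^{\text{th}}$ variance path relative to every other. This is what lets the remaining rows stay lower-triangular and identified separately by exclusion restrictions, consistently with the surrounding discussion. The main obstacle I anticipate is the step asserting that proportionality of two variance paths is the \emph{only} way a second off-diagonal entry in row $n$ of $\mathbf{Q}$ can survive: one must handle the full system of bilinear constraints coming from all in-regime dates jointly, rather than date by date, and rule out degenerate cancellations among several nonzero entries of row $n$ simultaneously. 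I would address this by arguing that diagonality of $\mathbf{Q}\diag(\boldsymbol{\sigma}_t^2)\mathbf{Q}'$ at every in-regime date yields, for each pair of distinct columns touched by row $n$, a linear relation in the variances that must hold across all $T_m$ dates, and that the assumed non-equality of the variance subvectors forces all but one such column to vanish.
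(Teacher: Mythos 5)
Your skeleton---conditioning on $s_t=m$, passing to the within-regime observational-equivalence problem, setting $\mathbf{Q}=\tilde{\mathbf{B}}_m\mathbf{B}_m^{-1}$, and exploiting simultaneous diagonalization of $\mathbf{Q}\diag(\boldsymbol{\sigma}_t^2)\mathbf{Q}'$ across all in-regime dates---is exactly the argument that underlies the matrix results the paper invokes; note that the paper's own ``proof'' is only a citation of Corollary 1 of L\"utkepohl et al.\ (2022) and Theorem 1 of L\"utkepohl and Wo\'zniak (2017), so you are reconstructing that result rather than paralleling a written-out argument. However, your reconstruction has a genuine gap at the decisive step. The corollary's hypothesis is \emph{non-equality} of the variance paths, $\boldsymbol{\sigma}_{n.m}^2\neq\boldsymbol{\sigma}_{i.m}^2$, whereas your argument needs \emph{non-proportionality}: you assert that a second nonzero entry in row $n$ of $\mathbf{Q}$ can survive only if the two paths move in fixed proportion, and that this ``is exactly what the hypothesis excludes.'' It is not. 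Paths satisfying $\sigma_{n.t}^2=c\,\sigma_{i.t}^2$ for all in-regime $t$ with $c\neq1$ satisfy the hypothesis, yet defeat the argument: taking $\mathbf{Q}$ equal to the identity except for the $(n,i)$ block, with row $n$ equal to $(1,1)$ and row $i$ equal to $(1,-c)$ in those coordinates, gives $\mathbf{Q}\diag(\boldsymbol{\sigma}_t^2)\mathbf{Q}'$ diagonal with strictly positive entries at every in-regime date, so an observationally equivalent structure exists whose $n^{\text{th}}$ row is not proportional to that of $\mathbf{B}_m$. What closes this gap---and what your proof never invokes---is the standardisation of the conditional variances built into the non-centred SV specification in equation~\eqref{eq:SV} (log-volatility with no intercept, $h_{n.0}=0$, prior mass concentrated at variance one, as discussed in Section~\ref{sec:priorandposterior}): under that normalisation of admissible variance paths, proportionality with $c \neq 1$ is excluded, so non-equality and non-proportionality coincide and the ``$\neq$'' formulation becomes correct. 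The same omission undermines your final step: ``matching the implied variance fixes the scale to $\pm1$'' is false for unnormalised paths, since $\tilde{\sigma}_{n.t}^2=q_{nn}^2\sigma_{n.t}^2$ is an equally valid path for any $q_{nn}\neq0$; without the standardisation you obtain identification only up to scale, not up to sign.

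A secondary weakness: your plan for ruling out ``degenerate cancellations'' relies only on the pairwise relations obtained from row $n$ against each other row $j$, namely $\sum_i Q_{ni}Q_{ji}\sigma_{i.t}^2=0$ for all in-regime $t$. These relations alone do not force row $n$ to be a signed unit vector: if every row $j\neq n$ simply avoided the columns touched by row $n$, all the constraints would hold trivially. What rules this out is the nonsingularity of $\mathbf{Q}$ (such a configuration confines $N-1$ rows to an $(N-2)$-dimensional coordinate subspace), and any self-contained version of your argument must use invertibility explicitly at this point, as the cited matrix results do.
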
 
\begin{proof}
The proof proceeds by the same matrix result as in Corollary 1 by \cite{LSUW2022} set for a heteroskedastic SVAR model without time-variation in the structural matrix. In line with Theorem 1 by \cite{LutkepohlWozniak2017}, the matrix result holds even if the inequality above is true for two periods.
\end{proof}

According to Corollary~\ref{cl1}, the $n^{\text{th}}$ row of $\mathbf{B}_{m.\mathbf{k}}$ is identified through heteroskedasticity in the $m\textsuperscript{th}$ regime if changes in the conditional variances of the $n\textsuperscript{th}$ structural shocks evolve non-proportionally to those of other shocks within that regime. 
Global identification of the $\mathbf{B}_m$ via heteroskedasticity is achieved when Corollary \ref{cl1} holds for all  $n \in \{1, \dots, N\}$. Finally, our identification implies the posterior distribution depends on a rotation matrix that changes the sign of structural matrix rows \citep[see][]{bh2015,hwz2007}. We resolve this by applying the normalisation of the signs by \citep[][]{wz2003a}.

In order to swiftly check whether the structural shock is identified via heteroskedasticity within each regime, we introduce as a second new feature of our model that the SV process of the structural shocks has a regime-dependent volatility parameter. Each of the $N$ conditional variances, $\sigma_{n.t}^2$, follows a non-centered SV process with regime-dependent volatility of the log-volatility that decomposes the conditional variances into 
\begin{align} \label{eq:SV}
	\sigma_{n.t}^2 = \exp\{\omega_n(s_t) h_{n.t}\} \quad\text{with}\quad  h_{n.t} = \rho_n h_{n.t-1} + v_{n.t} \quad\text{and}\quad v_{n.t} \sim \mathcal{N}(0,1)
\end{align}
where $\omega_n(s_t)$ is the volatility of the log-volatility parameter defined as plus-minus square-root of the conditional variance of the log-conditional variances $\log\sigma_{n.t}^2$. The log-volatility of the $n^{\text{th}}$ structural shock at time $t$,  $h_{n.t}$, follows an autoregressive model with initial value $h_{n.0} = 0$, where $\rho_n$ is the autoregressive parameter and $v_{n.t}$ is a standard normal innovation. The specification extends the SV process by \cite{Kastner2014} and \cite{LSUW2022} by the regime-dependence in the parameter $\omega_n(s_t)$. Therefore, the Markov process $s_t$ drives the time variation in both $\mathbf{B}\left(s_t,  \boldsymbol{\kappa}(s_t)\right)$ and $\omega_n(s_t)$.  

This specification allows us to verify the hypothesis of within-regime homoskedasticity of each of the shocks by checking the restriction \citep[see][for time-invariant heteroskedastic models]{Chan2018}
\begin{align}
\omega_{n}(m) = 0. \label{eq:omegarestric}
\end{align}
If it holds, the $n\textsuperscript{th}$ structural shock is homoskedastic in the $m\textsuperscript{th}$ regime. If it does not, then the structural shock's conditional variance changes non-proportionally to those of other shocks with probability 1, rendering the shock identified via heteroskedasticity. Consequently, two cases guarantee the identification of the monetary policy shock. First, this shock, and no other, is homoskedastic within a specific regime. Hence, the condition in equation~\eqref{eq:omegarestric} only holds for the third shock. Second, the third structural shock is heteroskedastic in that regime implying that equation~\eqref{eq:omegarestric} is not satisfied for the monetary policy shock.

\subsection{Priors, Standardisation, and Posterior Sampler}\label{sec:priorandposterior}

\noindent In this section, we discuss the prior specification of the remaining parameters of the model and their estimation. We use the following notation: $\mathbf{e}_{n.N}$ is the $n\textsuperscript{th}$ column of the identity matrix of order $N$, $\boldsymbol{\imath}_n$ is a vector of $n$ ones, $\mathbf{p}$ is a vector of integers from 1 to $p$, and $\otimes$ is the Kronecker product.

Let the $N\times (Np+d)$ matrix  $\mathbf{A}=\begin{bmatrix}\mathbf{A}_1 &\dots & \mathbf{A}_p & \mathbf{A}_d\end{bmatrix}$ collect all autoregressive and constant parameters. Extending the prior by \cite{Giannone2015}, each row of matrix $\mathbf{A}$ follows an independent conditional multivariate normal distribution with a 3-level global-local hierarchical prior on the equation-specific shrinkage parameters $\gamma_{A.n}$:
\begin{align}
[\mathbf{A}]_{n\cdot}' &\mid \gamma_{A.n} \sim\mathcal{N}_{Np+d}\left(\underline{\mathbf{m}}_{A.n}', \gamma_{A.n} \underline{\boldsymbol{\Omega}}_A\right),  \qquad \text{with}\\
\gamma_{A.n}\mid \underline{s}_{A.n} &\sim\mathcal{IG}2\left( \underline{s}_{A.n}, \underline{\nu}_A \right), \underline{s}_{A.n} \mid \underline{s}_{\gamma_{A.n}} \sim\mathcal{G}\left( \underline{s}_{\gamma_A}, \underline{\nu}_{\gamma_A} \right), \underline{s}_{\gamma_A} \sim\mathcal{IG}2\left( \underline{s}_{s_A}, \underline{\nu}_{s_A} \right),
\end{align}
where $\underline{\mathbf{m}}_{A.n} = \begin{bmatrix}\mathbf{e}_{n.N}' & \mathbf{0}_{N(p-1)+d}' \end{bmatrix}$,  $\underline{\boldsymbol{\Omega}}_A$ is a diagonal matrix with vector $\begin{bmatrix}\mathbf{p}^{-2\prime}\otimes\boldsymbol{\imath}_N' & 100\boldsymbol{\imath}_d'\end{bmatrix}'$ on the main diagonal, hence, incorporating the ideas of the Minnesota prior of \cite{Doan1984}. We set $\underline{\nu}_A, \underline{\nu}_{\gamma_A}, \underline{s}_{s_A},$ and $\underline{\nu}_{s_A}$ all equal to 10, which facilitates relatively strong shrinkage that gets updated, nevertheless. Providing sufficient flexibility on this 3-level hierarchical prior distribution occurred essential for a robust shape of the estimated impulse responses.

Each row of the transition probabilities matrix $\mathbf{P}$ of the Markov process follows independently a Dirichlet distribution, $[\mathbf{P}]_{m\cdot} \sim\mathcal{D}irichlet(\boldsymbol{\imath}_M + d_m \mathbf{e}_{m.M})$, as does the initial regime probabilities vector $\boldsymbol{\pi}_0$, $\boldsymbol{\pi}_0 \sim\mathcal{D}irichlet(\boldsymbol{\imath}_M)$.
To assure the prior expected regime duration of 12 months, we set $d_m = 11$ for models with $M=2$, and show the robustness of our results to alternative assumptions regarding the expected regime duration of 1  and 42 months.

The prior distribution for the regime-specific volatility of the log-volatility process, $\omega_n(s_t)$, follows a conditional normal distribution, $	\omega_n(m) \mid \sigma^2_{\omega.n} \sim\mathcal{N}\left( 0, \sigma^2_{\omega.n}\right)$, with a gamma distribution on the shrinkage level, $\sigma^2_{\omega.n} \sim\mathcal{G}(0.5, 1)$. The prior distribution for the autoregressive parameter of the log-volatility process is uniform over the stationarity region, $\rho_n \sim\mathcal{U}(-1,1)$.  
Notably, the marginal prior for $\omega_n(m)$ combines extreme prior probability mass around the restriction for homoskedasticity $\omega_n(m)=0$ and fat tails. The latter enables efficient extraction of the heteroskedasticity signal from data assuring that a shock is heteroskedastic within a regime only if the data favour this outcome.  Our results remain unchanged subject to variation of the prior scale of $\sigma^2_{\omega.n}$ ranging from values 0.1 to 2.

This prior setup following \cite{LSUW2022} implies that the structural shocks' conditional variance, $\sigma^2_{n.t}$, have a log-normal-product marginal prior implied by the normal prior distributions for the latent volatility, $h_{n.t}$, and for the volatility of the log-volatility parameters, $\omega_{n}(m)$. The log-normal-product marginal prior has a pole at value one, corresponding to homoskedasticity, fat tails, and a limit at value zero when the conditional variance goes to zero from the right. Importantly, the extreme concentration of the prior mass at one and strong shrinkage toward this value standardises the structural shocks' conditional variances. This assures appropriate scaling in the estimation of the structural matrix. In our empirical model, the standardisation holds true as the estimated variances fluctuate closely around one.

To draw from the posterior distributions we use a Gibbs sampler. We provide an \textbf{R} package \textbf{bsvarTVPs} facilitating the reproduction of our results as well as new applications.\footnote{Access the \textbf{bsvarTVPs} package at \url{https://github.com/donotdespair/bsvarTVPs}}

\section{Empirical evidence on TVI} \label{sec:resultsTVI}

\noindent In this section, we present the empirical evidence on time-variation in US monetary policy shock identification. We report the posterior probabilities for the four policy rules in each of the regimes as well as the estimated regime probabilities disclosing the timing of the rules' occurrence. We document the remarkable robustness of these results. Our insights summarise the conclusions from over three hundred models we estimated, including over one hundred twenty versions of our benchmark model.

We estimate our models with two Markov states ($M=2$) as data do not support three or four states. The additional states occurred at a negligible number of dates splitting the second regime into more regimes with nonsignificant variations in the values of the regime-specific parameters, which makes them overfitting. Additionally, the very low number of observations within these additional regimes undermines identification via heteroskedasticity and accurate estimation. 

\subsection{TVI probabilities}\label{ssec:TVIprob}

\noindent We find that data support TVI of US monetary policy shocks.  In the two estimated MS regimes, data favor two different generalized Taylor rule identification schemes with remarkably sharp posterior probabilities of the TVI indicators reported in Figure~\ref{fig:TVI}.

\begin{figure}[h]
	\caption{Posterior probabilities of the TVI components, the monetary policy reaction functions, in both regimes} \label{fig:TVI}
	\includegraphics[trim={0.0cm 1.0cm 0.0cm 2.0cm},width=1\textwidth]{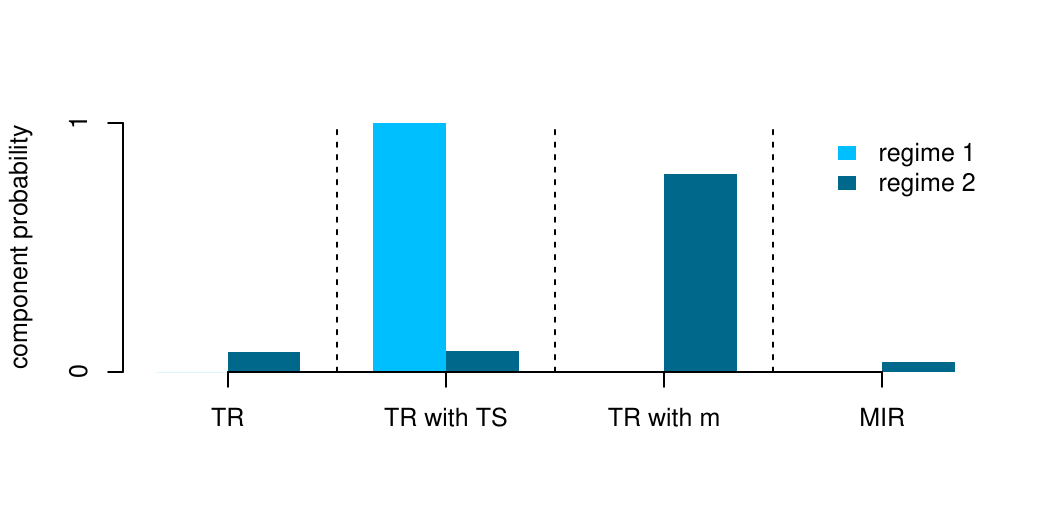} \\ 
	{\footnotesize Note: Figure shows the posterior probabilities of $\kappa_n(m)$ for four identification pattern in the monetary policy reaction function for regime one (light color) and regime two (dark color). US monetary policy shocks are identified via exclusion restrictions imposing on the contemporaneous relations either an empirical Taylor rule setting (``TR''), a TR augmented with term spread (``TR with TS''), a TR augmented with money (``TR with m''), or a money-interest rate rule (``MIR''). }
\end{figure}

In the first regime indicated by the light color in Figure~\ref{fig:TVI}, we find strong data support for including the term spread in the monetary policy reaction function. The posterior probability of the TVI component indicator for the restriction pattern imposing on the contemporaneous relations the TR augmented by term spreads (\textit{TR with TS}) is numerically equal to one. Hence, monetary policy reacts simultaneously to changes in the term spread signaling adjusted expectations on future economic developments.

Several papers argue that term spreads are a target of unconventional monetary policy as long term interest rates are the main transmission channel of the large asset purchase programs \citep{Baumeister2013,Liu2017,Feldkircher2016,Tillmann2020}. The monetary authority can target term spreads via the portfolio balance channel; increased asset purchases decrease the supply of long-term securities and lower long-term yields. \cite{DAmico2013}, among others, provide evidence that the large asset purchase programs during the zero lower bound period affect long-term yields. Our results extend those findings by showing that the monetary policy reaction function of the Fed accounts for the information revealed by term spreads even before the start of unconventional monetary policy since the first regime occurs mainly before 2008. This suggests that the interest rates are established with consideration to the whole bond yield curve in the spirit of \cite{diebold2006}. 

In the second regime visualized by the dark color in Figure~\ref{fig:TVI}, the monetary authority responds simultaneously to changes in the monetary aggregate. The posterior value of the TVI component indicator for the exclusion restriction pattern based on a TR augmented by money  (\textit{TR with m}) is 0.8. The posterior probability of the remaining components is very low, with \textit{TR} and \textit{TR with TS} both having 0.08 posterior probability and the money-interest rate rule 0.04.  

Our findings are in line with \cite{BelongiaIreland2015} supporting that after 2008 monetary policy actions lead to changes in the monetary aggregates. While the Fed officially targeted money supply before 1982, \cite{BelongiaIreland2015} argue that unconventional monetary policy can be seen as attempts to increase money growth. Similar to the strong support of data for money in the monetary policy rule in the second regime, the authors show that excluding a monetary measure from the interest rate rule is rejected by the data in the sample to 2007. Similarly, \cite{LutkepohlWozniak2017} find empirical evidence for monetary policy rules containing money in models identified via heteroskedasticity for data up to 2013. Neither of these papers considers rules including the term spread or time variation of the structural matrix.

The posterior means of the contemporaneous coefficients in the monetary policy reaction function, the third row of the structural matrix, with the bounds of the 90\% highest density interval in parentheses,
\begin{align*}
	\text{Regime 1:}& \quad \underset{(-0.04;\ 0.23)}{0.10} y_t \underset{(-0.11;\ -0.04)}{-0.07} \pi_t   \underset{(3.23;\ 3.84)}{+3.53}  R_t  \underset{(3.71;\ 4.27)}{+3.99} TS_t \\
	\text{Regime 2:}& \quad \underset{(-0.66;\ -0.31)}{-0.48} y_t \underset{(-0.09;\ 0.03)}{-0.03} \pi_t   \underset{(8.91;\ 16.36)}{+12.60}  R_t  \underset{(-1.75;\ -0.41)}{-1.13} m_t
\end{align*} 
show notable differences across the regimes. In the first regime, interest rates and term spreads are clearly contemporaneously related. The posterior mean of the contemporaneous coefficient on term spreads in the monetary reaction function is positive, with strong evidence that it is different from zero. Interest rates react also to inflation, whereas the coefficient on output is not different from zero. In the second regime, money aggregate plays the dominant role in the policy rule for interest rates, as evidenced by a significant and relatively highest in absolute terms coefficient on this variable. This regime also observes an increase in the importance of output relative to inflation in the monetary policy reaction function. The coefficient on the latter is insignificant. This fact, together with the timing of the second regime, is in line with the shift in the emphasis in the monetary policy after 2000 described by \cite{ba2016}.

\subsection{Robustness of TVI}

\noindent Selecting two different generalized TR identification patterns over time is remarkably robust with respect to extending or limiting the set of possible exclusion restrictions in the search. First, broadening the set of exclusion restrictions leads to the same posterior probabilities of the TVI mechanism in the monetary policy reaction function -- shown in Figure~\ref{fig:TVIrob}(a) for a model which includes a fifth option setting exclusion restrictions according to a TR with stock prices.

\begin{figure}[h]
	\caption{Posterior probabilities of the TVI mechanism in the monetary policy reaction function for extended or restricted TVI specifications} \label{fig:TVIrob}
\subfloat[extended by TR with sp]{\includegraphics[trim={0.0cm 0.5cm 0.0cm 2.0cm},width=0.5\textwidth]{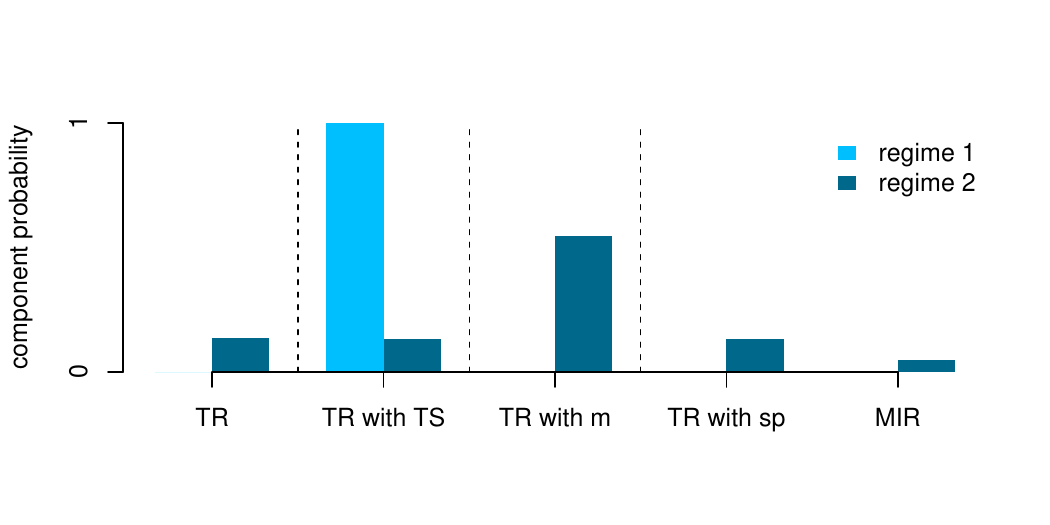}}
 \subfloat[TR and TR with TS]{\includegraphics[trim={0.0cm 0.5cm 0.0cm 2.0cm},width=0.5\textwidth]{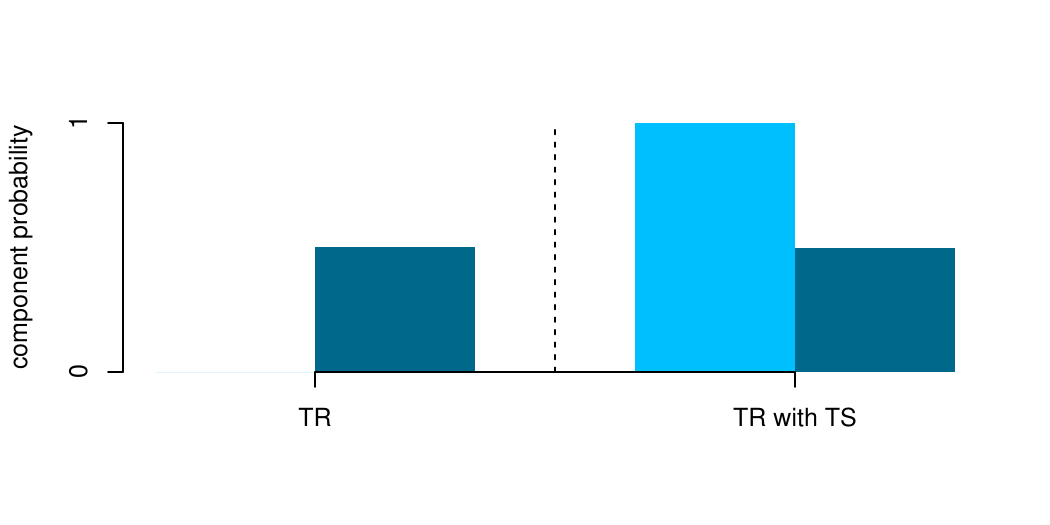}}\\
\subfloat[TR and TR with m]{	\includegraphics[trim={0.0cm 0.5cm 0.0cm 2.0cm},width=0.5\textwidth]{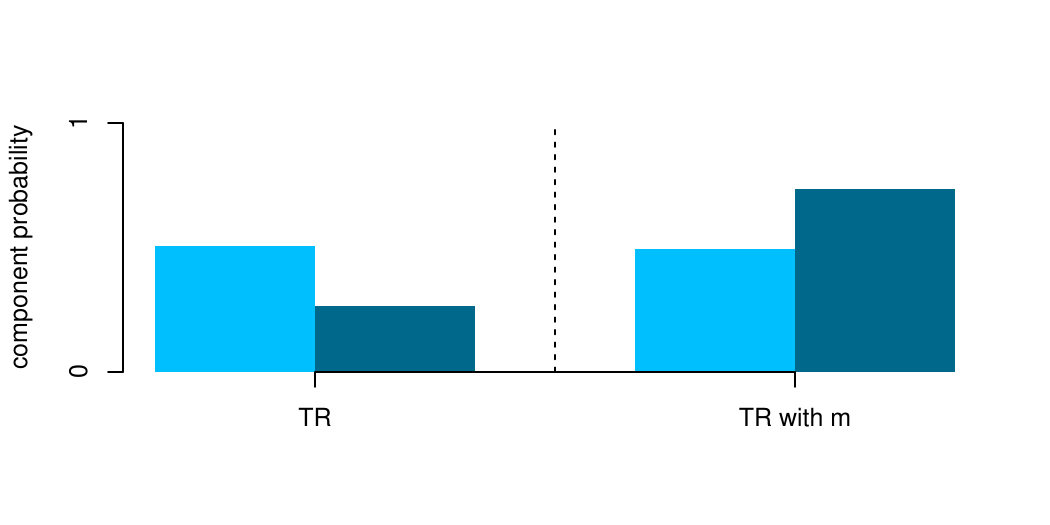}} 
 \subfloat[TR and unrestricted]{	\includegraphics[trim={0.0cm 0.5cm 0.0cm 2.0cm},width=0.5\textwidth]{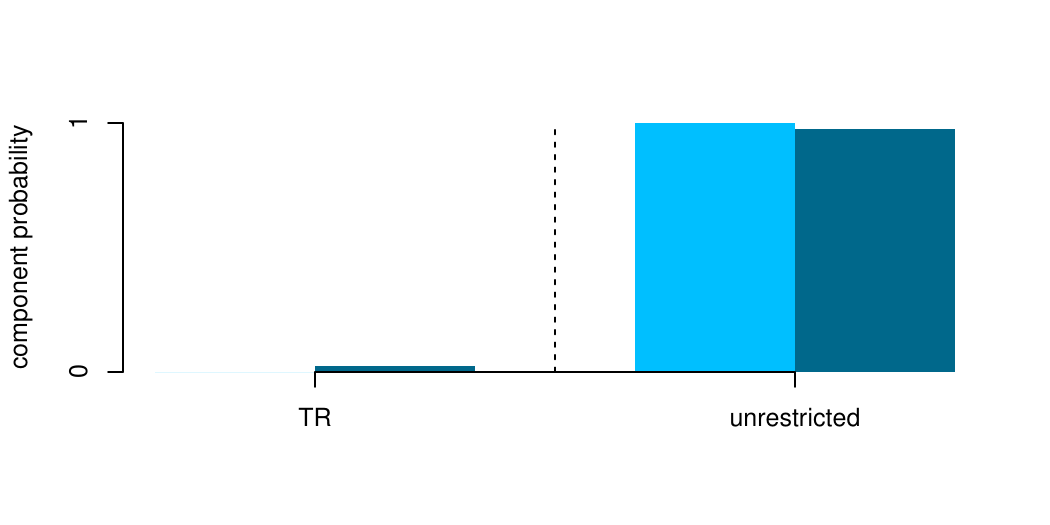}} \\
	{\footnotesize Note: Figure shows the posterior probabilities of $\kappa_n(m)$ for the identification pattern in the monetary policy reaction function for regime one (light color) and regime two (dark color) for three different models. Model (a) allows for TR, TR with TS, TR with m, TR with sp, and MIR. Model (b) allows for TR and TR with TS, model (c) for TR and TR with m, and model (d) for TR and unrestricted.}
\end{figure}

Second, limiting the set of possible exclusion restrictions to the TR and one generalized alternative supports the role of the term spread and money in the respective regimes. Allowing for either the \textit{TR} or \textit{TR with TS} as exclusion restriction patterns data support the generalized TR in the first regime with 100\% posterior probability as shown in Figure~\ref{fig:TVIrob}(b). Absent the relevant alternative, the search in the second regime is uninformative placing equal probability on both specifications. Imposing exclusion restrictions either in line with the \textit{TR} or \textit{TR with m}, the posterior probability of the \textit{TR with m} is around 80\% in the second regime, exactly matching the result from our benchmark model, while it is 50\% in the first regime in Figure~\ref{fig:TVIrob}(c). 

When the TR is compared with an unrestricted third row of the structural matrix, the latter is strongly preferred in both regimes as shown in Figure~\ref{fig:TVIrob}(d). This shows that a rule containing the relevant alternatives such at \textit{TR with TS} or \textit{TR with m} is also preferred over the simple \textit{TR}. Our attempts of estimating a model with an unrestricted policy rule resulted in less persistent regimes and lack of within-regime identification via heteroskedasticity. Overall, this exercise documents that whenever a policy rule preferred by the data or a rule containing it is available, the result is the same as in our benchmark model, and at the absence of it any policy rule is evenly likely.


Moreover, our analysis of an extensive set of alternative specifications changing the model setup reinforces our conclusions regarding the TVI probabilities. These alternative model specifications included those with different lag lengths from one to 12 or changes in the variables included. We used personal consumption expenditures excluding food and energy or an interpolated GDP deflator instead of CPI, interpolated GDP instead of the industrial production index, shadow rates instead of federal funds rates, monetary base as an alternative money aggregate, and added credit spreads or inflation expectations to the model. Furthermore, we changed prior set-ups by changing hyper-parameters on the priors set on the contemporaneous, autoregressive coefficients, SV or the Markov-process with no changes in the TVI probabilities.

\subsection{Interpretation of the two Markov-switching regimes}

\noindent Having established that data clearly support regime-specific identification of US monetary policy shock, we now take a closer look at the two estimated MS regimes.

Figure~\ref{fig:regimeprob} shows the estimated regime probabilities for the second regime (a gray shaded area), together with highlighted periods when the probability for the second regime is larger than 0.8 (dark-colored thick line), and specific events (black vertical lines topped by letters). Regime one is predominant in the first part of the sample until 2000. It is present in periods of rather normal economic developments (aka non-crisis times). After 2000 the second regime occurs more frequent and the periods are characterized by a mixture between the two regimes. The second regime clearly prevails during crisis periods, in particular the financial and COVID-19 ones. In the aftermaths of these events, regime 2 gains persistence. Additionally, it is present at earlier extra-ordinary times such as the peak of the 1981--1982 crisis, the burst of the dot-com bubble in March 2000, or 9/11. Beside its characterization as a crisis regime, it also predominately covers periods of unconventional monetary policy actions such as Quantitative Easing 1, Operation Twist, and several subsequent announcements of the Fed, starting in August 2011, stating that they would keep the federal funds rate at its effective zero-lower bound for a substantial period.

\begin{figure}[t!]
	\caption{Regime probabilities of the Markov process} \label{fig:regimeprob}
		\includegraphics[trim={1.5cm 0.5cm 1.0cm 2.0cm},width=1\textwidth]{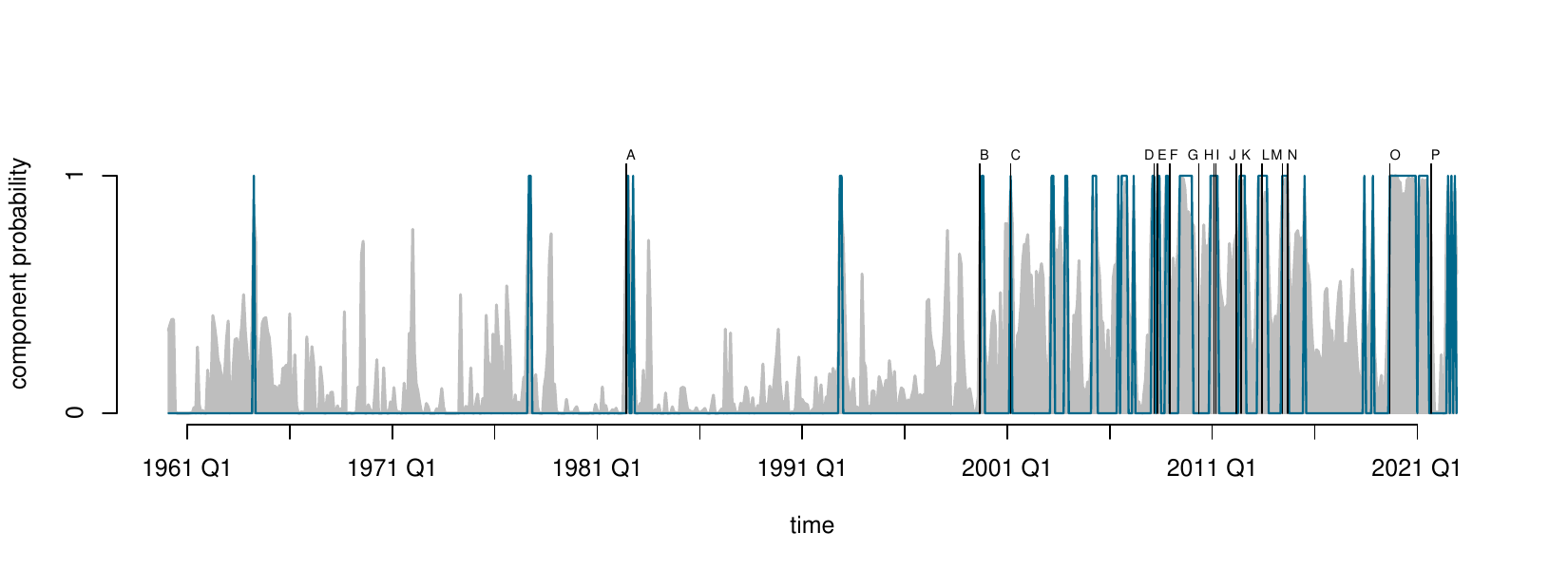} \\
	{\footnotesize Note: Figure shows the estimated regime probabilities of the second regime  (shaded areas) and the occurrences of the second regime with probability larger than 0.8 (dark color). Probabilities for the first regime are one minus the probabilities of the second regime.\\ \textbf{List of events:}\\[1ex] \footnotesize
	\begin{tabular}{p{0.4cm}p{1.5cm}p{13cm}} \toprule
		A &  Dec 1982& peak of 1981-1982 recession\\
		B	& Mar 2000 & dot-com bubble burst \\
		C	& Sep 2001  & September 11 attacks\\
		D & Sep 2008 & bankruptcy of Lehman Brothers on September 15, 2008 \\
		E & Nov 2008  & FOMC announces Quantitative Easing 1\\
		F & Jun 2009 & NBER declares end of the US great recession\\
		G	&		Nov 2010 & FOMC announces Quantitative Easing 2 \\
		H	&		Aug 2011 & FOMC announces to keep the federal funds rate  at its effective zero lower bound ``at least through mid-2013''$^*$\\ 
		I & Sep 2011  & FOMC announces Operation Twist\\
		J	&			Sept 2012 & FOMC announces Quantitative Easing 3 \\
		K	&		Dec 2012 & FOMC announces to purchase 45B\$ of longer-term Treasuries per month for the future, and to keep the federal funds rate at its effective zero lower bound\\\
		L	&	Dec 2013 & FOMC announces to start lowering the purchases of longer-term Treasuries and mortgage-backed securities (to \$40B and \$35B per month, respectively)\\
		M	&		Dec 2014 & FOMC announces  ``it can be patient in beginning to normalize the stance of monetary policy''\\
		N &		Mar 2015 & FOMC announces ``an increase in the target range for the federal funds rate remains unlikely at the April FOMC meeting''\\
		O & Mar 2020& start of COVID-19 pandemic. First unscheduled FOMC meeting on March 3 in response to the outbreak of the pandemic announcing to ``lower the target range for the federal funds rate by 1/2 percentage point'' and to ``use its tools and act as appropriate to support the economy.'' On March 23, FOMC statement to ``continue to purchase Treasury securities and agency mortgage-backed securities in the amounts needed to support smooth market functioning''.\\
		P & Mar 2022& First indications of changes in monetary policy with FOMC annoucing that one member voted against keeping the federal funds rate at the same rate. In May and June the federal funds rate was increased with the largest hikes since May 2000 and 1994, respectively. \\
		\bottomrule
\end{tabular}\\ $^*$ all quotes from \url{https://www.federalreserve.gov/newsevents/pressreleases.htm}.}
\end{figure}

As revealed by TVI, the Fed acts on the federal funds rate simultaneously with term spread movements in the first regime and money in the second. Moreover, regime-specific data and structural shocks' sample moments reported in Table~\ref{tab:regimeparam} attach an crisis versus non-crisis interpretation to the regimes. The regime-specific means and standard deviations of the variables display patterns characteristic to a lower level of economic activity and higher volatility. This is particularly well-evidenced for the industrial production growth rate and inflation that have considerably lower means and higher standard deviations in the second regime. The federal funds rate is on average lower and the term spread higher in the second regime while they both exhibit lower volatility then. Money is twice as volatile in the second regime. Remarkably, the differences for $sp_t$ are small indicating that stock market developments explain rather little in the differences across the regimes.

\begin{table}[t]
	\caption{Regime-specific sample moments of data and structural shocks} \label{tab:regimeparam}
	\centering
	\begin{tabular}{lccccccccccc}
		\toprule
& \multicolumn{5}{c}{data}  &&  \multicolumn{5}{c}{structural shocks}\\
\cline{2-6} \cline{8-12}
& \multicolumn{2}{c}{Regime 1} && \multicolumn{2}{c}{Regime 2} && \multicolumn{2}{c}{Regime 1} && \multicolumn{2}{c}{Regime 2}\\   
& mean & sd &&  mean & sd && mean & sd && mean & sd  \\ \cline{2-3} \cline{5-6} \cline{8-9} \cline{11-12}
$y_t$ & 2.72& [8.40] && 1.46& [17.36] && 0.00& [0.97] && -0.11& [1.65] \\ 
  $\pi_t$ & 4.11& [3.61] && 2.65& [4.04] && 0.00& [1.01] && -0.04& [1.13] \\ 
  $R_t$ & 5.67& [3.66] && 2.54& [2.74] && 0.03& [1.05] && -0.14& [1.23] \\ 
  $TS_t$ & 0.87& [1.69] && 1.48& [1.28] && -0.02& [1.61] && -0.07& [0.76] \\ 
  $m_t$ & 6.45& [3.84] && 7.41& [8.53] && -0.02& [0.90] && 0.08& [1.23] \\ 
  $sp_t$ & 7.24& [51.14] && 6.15& [54.58] && 0.01& [0.99] && -0.06& [1.01] \\ 
	\bottomrule
	\end{tabular}
	\begin{flushleft}
		\footnotesize Note: Table reports sample means and standard deviations for variables and estimated structural shocks in both regimes. The regime-specific moments are given for the series in first differences for $y_t$, $m_t$, and $sp_t$.
	\end{flushleft}
\end{table}

The sample moments characterising the structural shocks within the regimes reported in Table~\ref{tab:regimeparam} confirm our interpretation of the second regime as the crisis one. As long as the regime-specific averages of the shocks are equal to zero in the first regime, the signs of all the shocks in the second one are negative, except for the money supply shock. The standard deviations are greater than those in the first regime, except for the term spread shock. The negative averages for industrial production, inflation, and stock price shocks indicate strong recessionary pressures from the economy in the second regime. Particularly indicative is the high in absolute terms and a negative average of the monetary policy shock reaching -14 basis points and that of the term spreads, equal to -7 basis points. This indicates that the second regime captures monetary policy that is characterized by cutting the interest rates and narrowing the bond yield curve supplemented by expanding the monetary basis. These findings confirm our labelling of the second regime as more erratic crisis periods that trigger unconventional monetary policy actions.

We find that the regime allocations are robust to various changes in the model specification. We illustrate the role of important factors for the regime allocation by altering the measure of the monetary policy indicator and inflation, changing the prior expected duration of the Markov process, and discarding TVI. The estimated regime probabilities, shown in Figure~\ref{fig:regimeprob_rob}, are very similar for all alternatives: for models including shadow rates of \cite{Wu2016} as an alternative measures of short term interest rates (labeled as ``with shadow rates''), using GDP deflator as measure of inflation (``with GDP deflator''), setting the expected regime duration of the Markov process to 42 months (``prior MS''),  with MS but without TVI where we impose on the structural matrix restrictions according to either \textit{TR}, \textit{TR with TS} or \textit{TR with m} (labeled by ``no TVI with TR'', ``no TVI with TR with TS'', ``no TVI with TR with m'', respectively).

\begin{figure}[t]
	\caption{Regime probabilities of the Markov process across alternative model specifications} \label{fig:regimeprob_rob}
		\includegraphics[trim={1.5cm 0.5cm 1.0cm 2.0cm},width=1\textwidth]{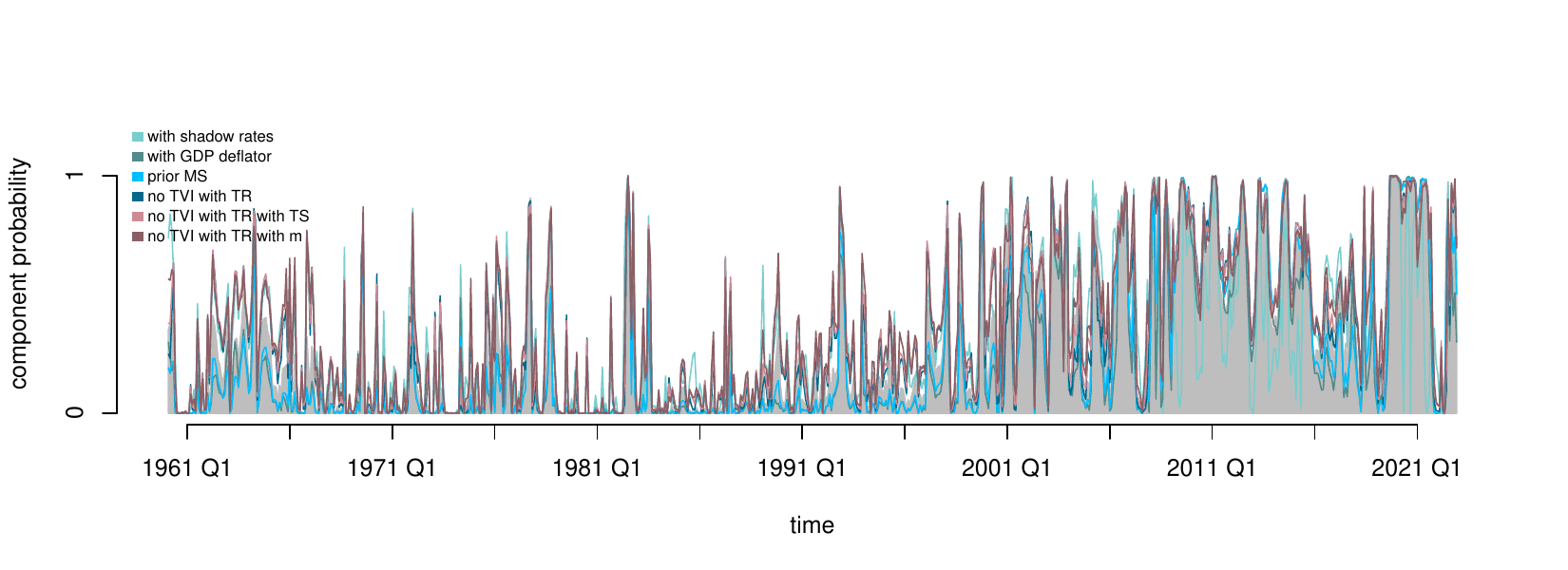} \\
	{\footnotesize Note: Figure shows the estimated regime probabilities of the second regime for several model specifications (solid lines) together with those of the base line model (shaded area). Probabilities for the first regime are one minus the probabilities of the second regime.  }
\end{figure}

\subsection{Within-regime identification via heteroskedasticity}

\noindent Our search for regime-specific identification patterns requires that the exclusion restrictions on the monetary policy rule over-identify it and, thus, can be tested using statistical methods. We achieve this by identification of the monetary policy shock via heteroskedasticity. Figure~\ref{fig:omega} plots histograms of the monetary policy shock volatility of the log-volatility parameter $\omega_3(m)$ for both regimes, $m\in\{1,2\}$. Verifying whether these parameters differ from zero implies identifying the underlying shocks. We find clear evidence for heteroskedasticity within both regimes for the monetary policy reaction function visualized by the bi-modality of the posterior distribution. The mass of the posterior distributions is away from zero, which is more pronounced in the second regime. Therefore, the monetary policy shock is identified via heteroskedasticity and our selection of identifying restrictions is valid in both regimes.

\begin{figure}[t]
	\caption{Marginal posterior and prior densities of the regime-specific volatility of the log-volatility parameter, $\omega_{3}$, for the monetary policy shock} \label{fig:omega}
	\includegraphics[trim={0.0cm 0.5cm 0.0cm 2.0cm},width=1\textwidth]{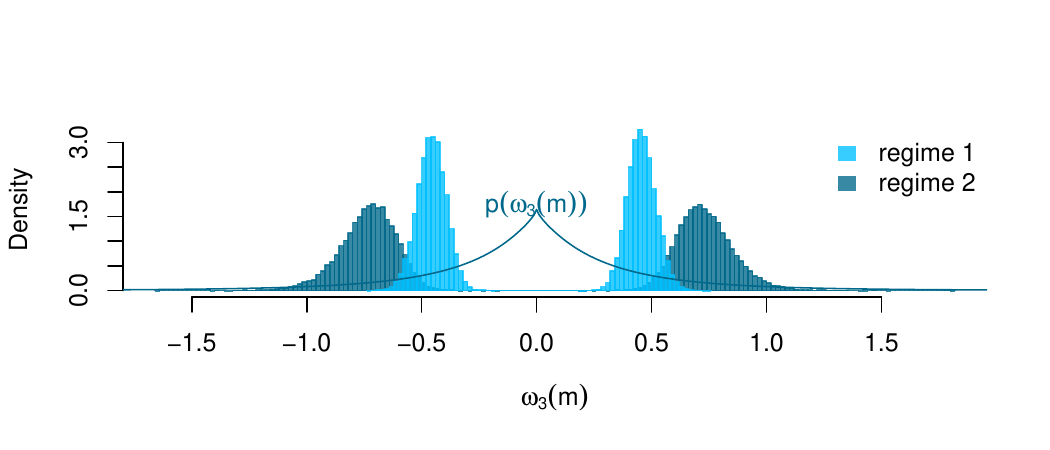} \\
	{\footnotesize Note: Figure reports histograms of the posteriors for the regime-specific monetary policy shock volatility of the log-volatility parameters $\omega_3(m)$ for regime one (light color) and regime two (dark color). The marginal prior distribution is given by the solid line. The concentration of the posterior probability mass away from zero indicates that the monetary policy shock is identified via heteroskedasticity in both regimes.}
\end{figure}

We then label the third shock as the monetary policy shock based on the exclusion restrictions imposed on the monetary policy reaction function. This is sufficient in the second regime featuring the money-augmented TR as the set of restrictions in the corresponding row of the structural matrix is unique. However, in the first regime, the exclusion restrictions selected for the third row of the structural matrix are the same as those for the fourth. Here, we additionally consider the impulse response functions to label the third structural shock as the monetary policy shock.

We find that the identification of the monetary policy shocks via heteroskedasticity is robust to changes of hyper-parameters in the prior distributions on contemporaneous effects, autoregressive coefficients, SV, or the Markov process. Crucially, changing the scale hyper-parameter of the prior controlling for the volatility of the log-volatility level by order of magnitude (we tried values ranging from 0.1 to 2) only marginally affects the posterior distribution of $\omega_{3}(m)$ in both regimes. Also, using alternative measurements for our variables, such as shadow rates instead of federal funds rates or an interpolated GDP deflator for CPI inflation, has no impact on identification via heteroskedasticity. However, there are some factors impeding identification via heteroskedasticity especially in the second regime. They include, for instance, using interpolated GDP as an economic activity measure or specifying a model with MS in the structural matrix but without TVI. 

\section{TVI Effects of US monetary policy shocks}  \label{sec:resultsIRF}

\noindent In this section we present regime-specific impulse response to a US monetary policy shock. We then focus on the role of the term spread in more detail, analyze time-variation of the effects of monetary policy shocks and a counterfactual scenario in which the monetary policy in the second regime does not emerge.

\subsection{Dynamic responses to US monetary policy shocks}

\noindent We find notable differences in the dynamic responses to the US monetary policy shock across the regimes. They lead to distinguishing interpretations of the monetary policy conducted in the particular sub-periods. Based on its effects in the first regime, monetary policy can be characterized as conventionally inflation-targeting, whereas in the second one, as strongly expansionary with specific macro-financial interactions. Our analysis is based on Figure~\ref{fig:IRF} reporting the median impulse responses to a monetary policy shock decreasing interest rates by 25 basis points in the first (light color) and second (dark color) regime together with the 95\% posterior highest density sets. As in the second regime the monetary policy shocks are on average negative as reported in Table~\ref{tab:regimeparam}, we discuss the effects of a negative shock.

\begin{figure}[t]
	\caption{Impulse response to US monetary policy shocks} \label{fig:IRF}
	\subfloat[response of $y$]{\includegraphics[trim={0.0cm 0.5cm 0.0cm 2.0cm},width=0.33\textwidth]{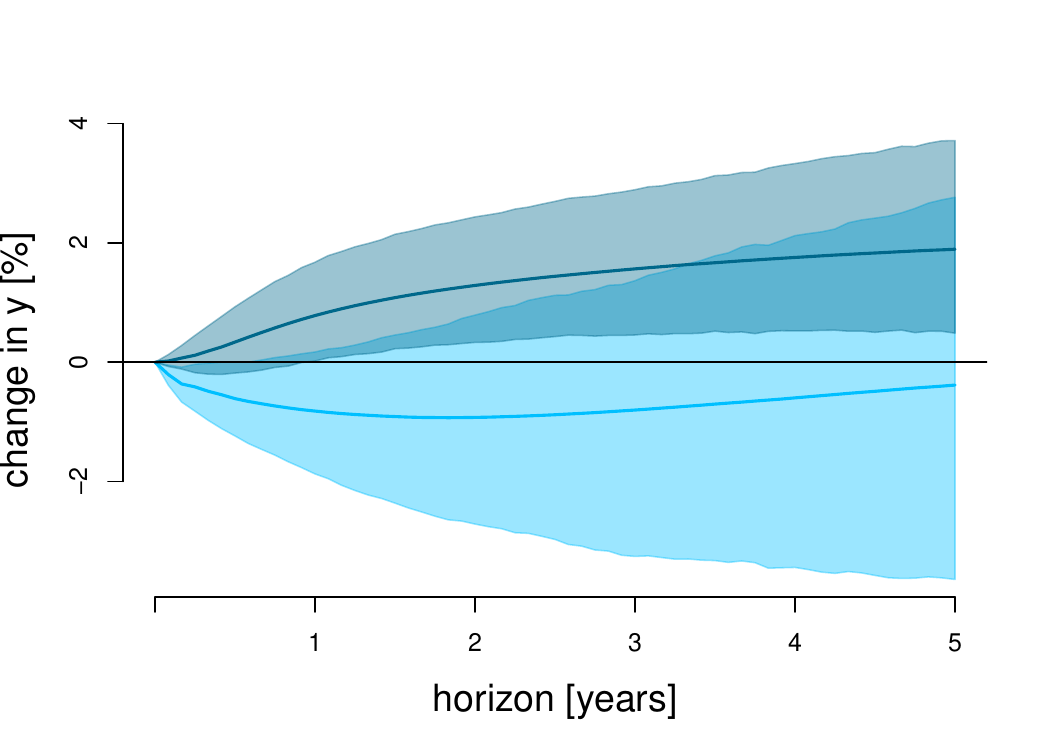}}
	\subfloat[response of $\pi$]{\includegraphics[trim={0.0cm 0.5cm 0.0cm 2.0cm},width=0.33\textwidth]{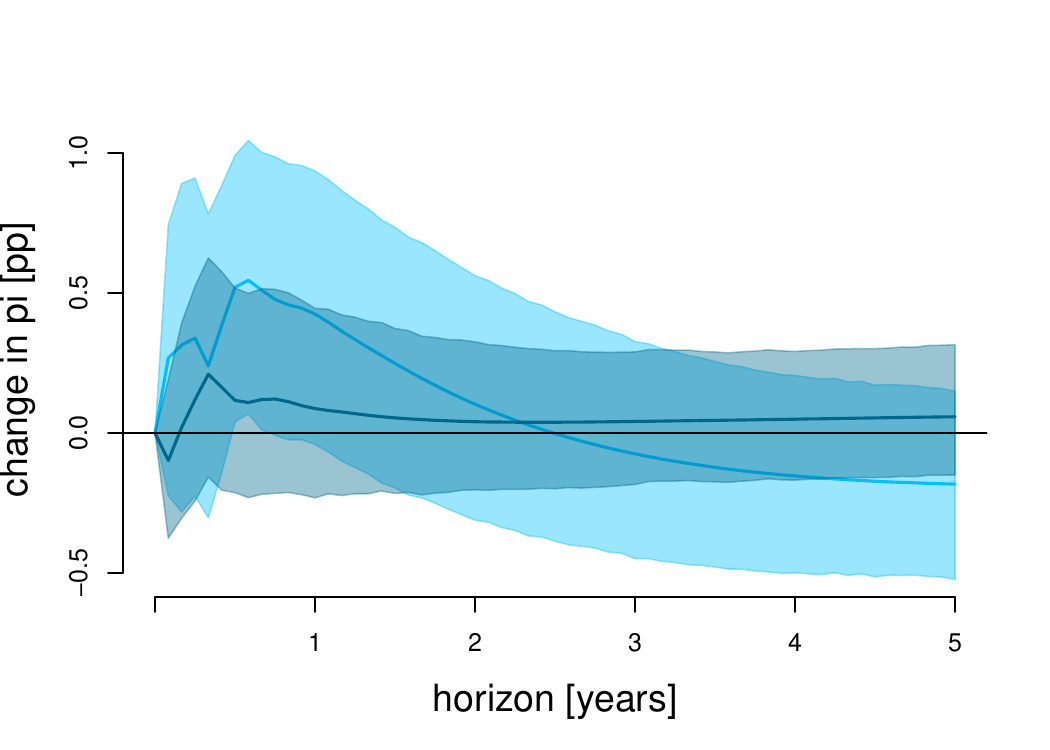}}
	\subfloat[response of $R$]{	\includegraphics[trim={0.0cm 0.5cm 0.0cm 2.0cm},width=0.33\textwidth]{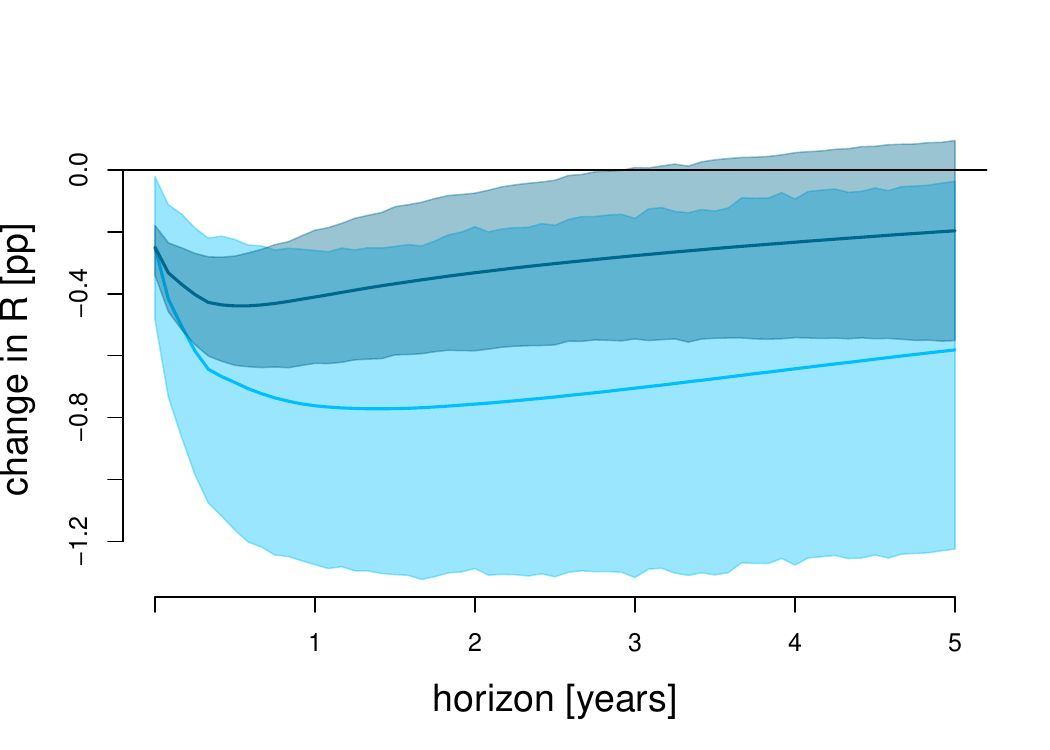}} \\
	\subfloat[response of $TS$]{	\includegraphics[trim={0.0cm 0.5cm 0.0cm 2.0cm},width=0.33\textwidth]{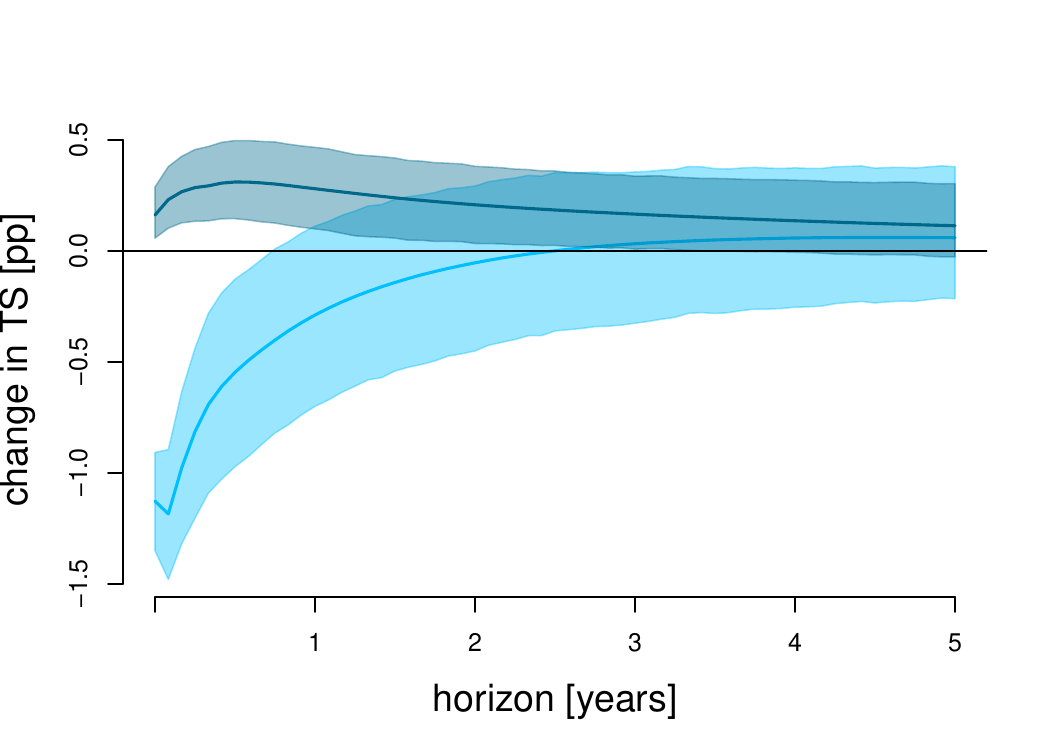}}
	\subfloat[response of $m$]{	\includegraphics[trim={0.0cm 0.5cm 0.0cm 2.0cm},width=0.33\textwidth]{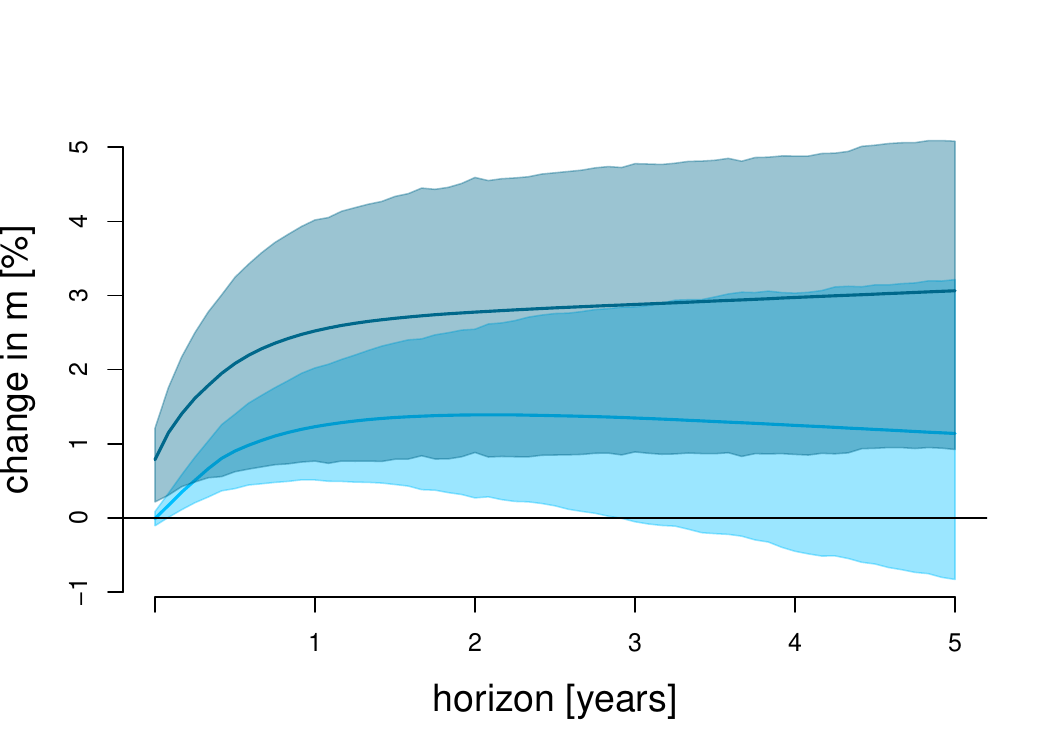}}
	\subfloat[response of $sp$]{	\includegraphics[trim={0.0cm 0.5cm 0.0cm 2.0cm},width=0.33\textwidth]{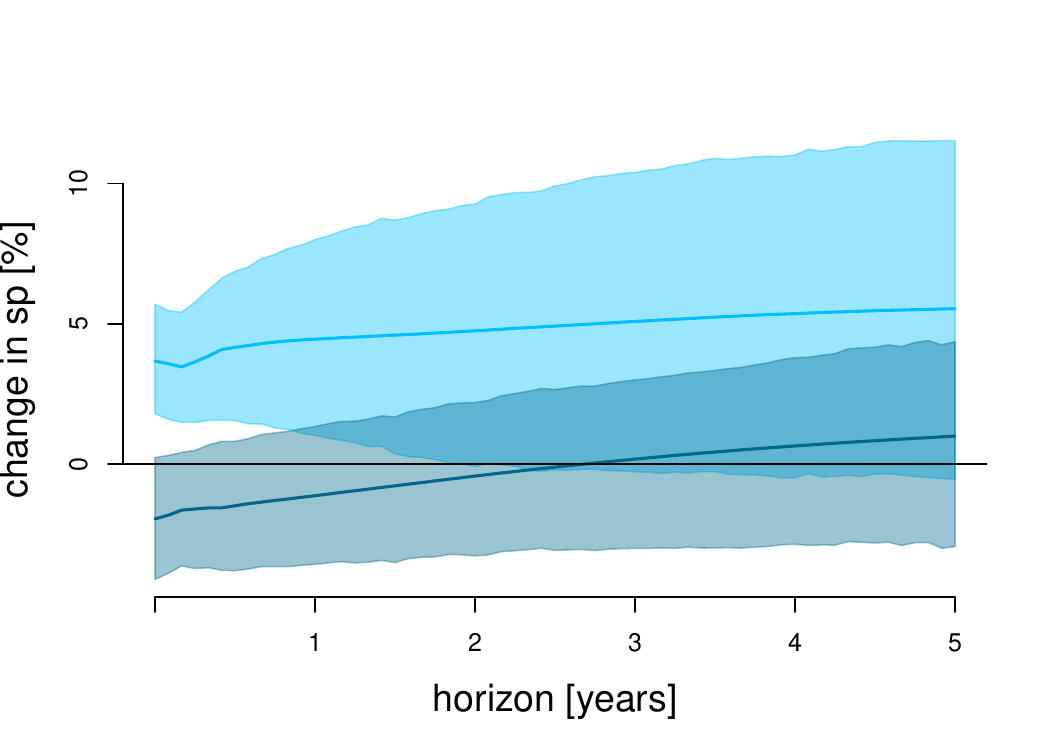}}
	\\	{\footnotesize Note: Figure shows posterior median impulse responses to a expansionary monetary policy shock of 25 basis points (solid lines) for regime 1 (light color) and regime 2 (dark color) together with 95\% posterior highest density sets (shaded areas).}
\end{figure}

In the first regime, an unexpected monetary loosening leads to a persistent level shift in the federal funds rate. The response falls to the lowest point after 15 months at 77 basis points. This shock is effective in increasing inflation within the perspective of the following 19 months and with the highest impact of 0.55 pp after seven months. The effect on output is statistically insignificant. In light of the regime characteristics, we interpret the reaction, which seems to be ad odds with (linear) empirical evidence, as showing that monetary loosing fails to provide economic stimuli in non-crisis periods. It suggests that the inflationary push overrules potential direct effects of interest rate decreases in rather calm and economically stable periods. The system features the liquidity effect with the money supply substantially and permanently increasing. In line with \cite{Ang2011}, the lower interest rates decrease the slope of the bond yield curve for nearly a year, with the trough response narrowing the term spread by 118 basis points after one month. They also lead to a sharp increase in stock prices. 

Moreover, in line with the inflation-targeting goal, our results imply that an unexpected monetary tightening is effective in curbing inflation in the first regime. An increase in the federal funds rate induced by a positive monetary policy shock would result in a statistically significant decrease in inflation reaching up to 0.55 pp after seven months.

In the second regime, a negative monetary policy shock decreases the federal funds rate permanently but slightly more moderately than in the first regime. In this regime, the monetary policy is strongly expansionary as output increases permanently and by nearly 1\% within the first year. At the same time, inflation's response is negligible and only significant at the horizon of four months following the shock reaching 0.2 pp. These phenomena are in line with the Fed's shifting emphasis from inflation to output after 2000, discussed in Section~\ref{ssec:TVIprob}. 

In these times of crises, the liquidity effect is positive and even more pronounced in this regime with the response stronger by 1 pp nearly at all horizons. The slope of the bond yield curve is only slightly affected as the term spread widens by 30 basis points after one month, strengthening the conclusion by \cite{Tillmann2020} on time variation in this effect. The response cools off gradually but remains significant at all horizons. The stock prices contract by nearly 2\% on impact -- the effect not being statistically significant in the second regime. This result closely follows the change over time of the impulse responses reported by \cite{gali2015} that the authors explained by the existence of rational bubbles in the stock market.

These effects are in line with the characterisation of the monetary policy in the US as unconventional in the periods aligning with the second regime. The main objectives of these policies were to stimulate the economy in the aftermaths of the crises and provide liquidity, all of which was implemented without substantially increasing inflation in line with ``missing inflation`` during the 2010s \citep[see][]{bobeica2019}, a period which is covered by the second regime. 

Our results indicate time-variation in the monetary policy shock identification. Time-varying parameters and identification play a crucial role in the interpretability of the shocks. It is not obtained if any of these features are omitted from the model formulation. Nevertheless, our results are robust to many other alterations in the specification. The impulse responses keep their shapes irrespectively of changing the lag order as long as it is greater than 4, replacing the variables by alternative measurements, such as using interpolated real GDP or shadow rates for industrial production or interest rates, respectively,  or altering the prior hyper-parameter values. Finally,  TVI allows us to obtain distinctive interpretations of the monetary policy effects in both MS regimes. We show other aspects of the monetary policy in the second regime, characterising it as unconventional by analysing the effects of the term spread shock.

Finally, the contractionary monetary policy shock effectively lower inflation and, therefore, our findings do not exhibit a price puzzle. We find two crucial conditions under which our results do not feature the price puzzle. First, we include as measure of inflation the price index in log first differences. Note, that the inflation series includes a unit root as other series do while the log CPI price index is integrated of order two. Second, TVI is essential to avoid model misspecification resulting in the puzzle as it allows for an enriched analysis of the time-varying nature of monetary policy effects and enhances the interpretability of impulse responses.

\subsection{Dynamic responses to US term spread shocks}

\noindent To understand the differences of the term spread response to the monetary policy shock across regimes, we relate it to the responses of interest rates, term spread, and money to a term spread shock, shown in Figure~\ref{fig:IRFts}. The spread shock -- an unexpected narrowing of the term spread -- describes an unanticipated change in the financial markets such as shifts in liquidity or risk preferences, or inflation expectations. When the federal funds rate is at its effective zero lower bound, a term spread shock can be interpreted as a result of the Feds bond purchase programs  \citep{Baumeister2013}. 

\begin{figure}[h]
	\caption{Impulse response to US term spread shocks} \label{fig:IRFts}
	\subfloat[response of $R$]{	\includegraphics[trim={0.0cm 0.5cm 0.0cm 2.0cm},width=0.33\textwidth]{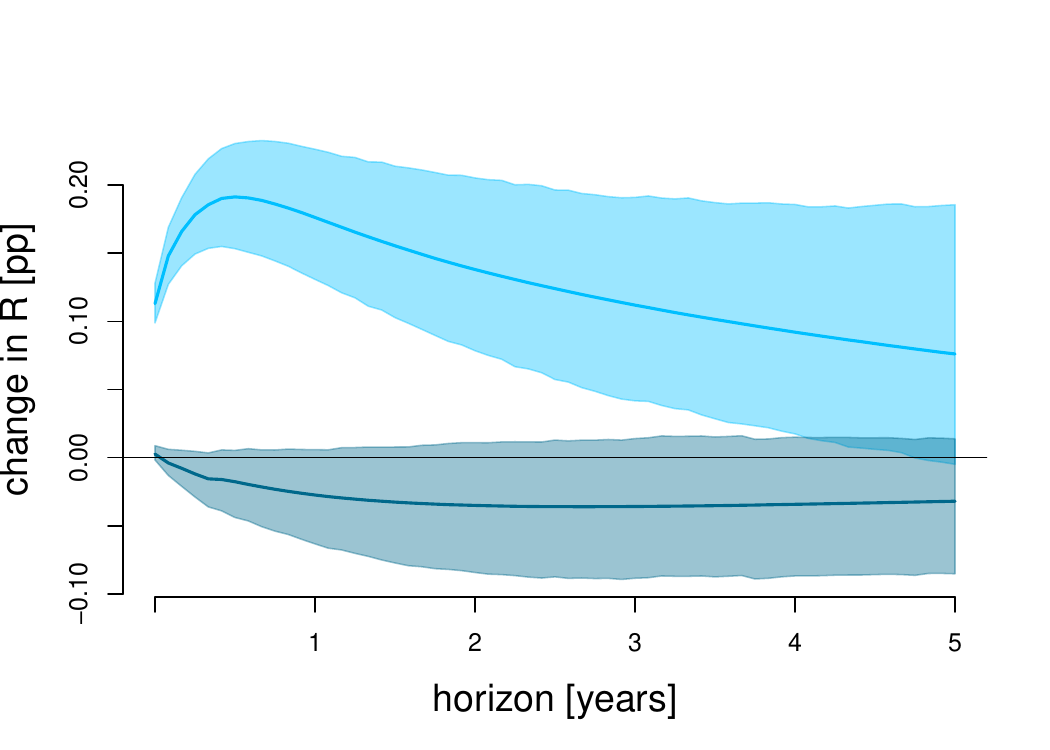}} 
	\subfloat[response of $TS$]{	\includegraphics[trim={0.0cm 0.5cm 0.0cm 2.0cm},width=0.33\textwidth]{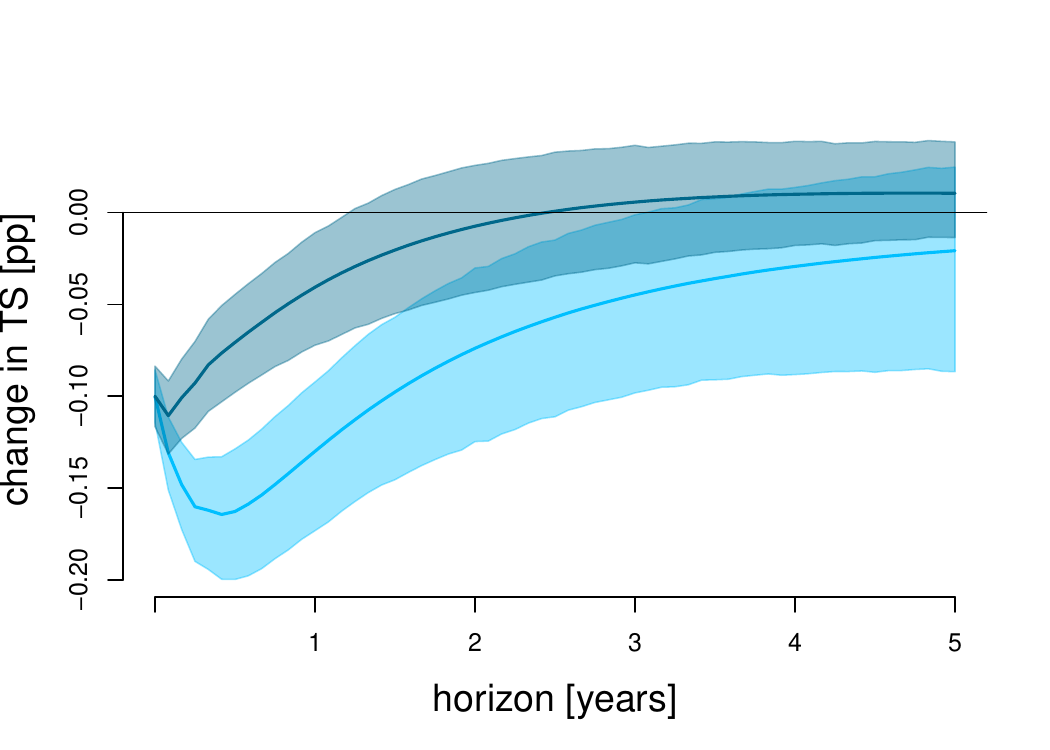}}
	\subfloat[response of $m$]{	\includegraphics[trim={0.0cm 0.5cm 0.0cm 2.0cm},width=0.33\textwidth]{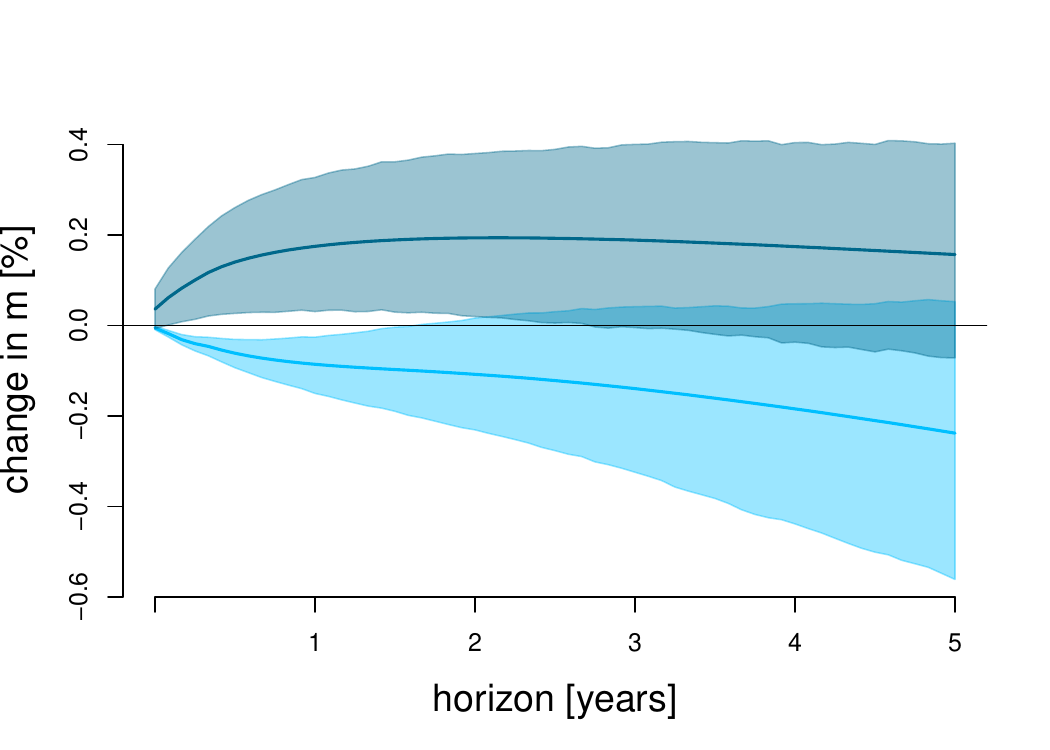}}
	\\	{\footnotesize Note: Figure shows posterior median impulse responses to a negative spread shock of 10 basis points (solid lines) for regime (light color) and regime 2 (dark color) together with 95\% posterior credibility sets (shaded areas).}
\end{figure}

In the first regime, the term spread shock is identified via heteroskedasticity. Since TVI selects Taylor rule with term spread as exclusion restriction pattern for the third row in $\mathbf{B}$, and hence the same restrictions are as set in the fourth row (due to the recursive assumption on the rest of $\mathbf{B}$), the exclusion restrictions on the forth row are not sufficient to label the shocks. We do that by looking at the IRFs. In the second regime, the term spread shock is identified via the exclusion restrictions.

A negative spread shock of 10 basis points lowers terms spreads for at least a year in both regimes. The reaction in the first regime is stronger and lasts longer. We find notable differences across regimes for the response of the federal funds rate and money. 

The short term interest rates increase in the first regime by 0.11 pp on impact and peaks at 0.19 pp within one year. In the second regime, interest rates do not move at any horizon. The differences in the impact response are remarkable, visualized in the histograms of the zero horizon responses in Figure~\ref{fig:histRtoTS}. The posterior estimates are positive and clearly away from zero in the first regime, while in the second the posterior distribution of the estimated impact response is tightly centered around zero with considerable mass at zero. In the second regime, which predominately covers the period of the zero-lower-bound environment,  monetary policy interventions can have a narrowing effect on the term spread while the short term interest rate does not move. That finding provides empirical evidence for the restriction that short term interest rates do not react to a term spread shock for four quarters imposed by \cite{Baumeister2013} and \cite{Feldkircher2016}. However, we found support for the restriction the other authors assume only for the second regime, and we obtained it in the process of estimation using TVI rather than imposing it in a system otherwise identified by sign restrictions. Nevertheless, this result confirms the identification of a \textit{pure} term spread shock flattening the bond yield curve without affecting the interest rate contemporaneously that characterises unconventional monetary policy according to \cite{Baumeister2013}.

\begin{figure}[t]
	\caption{Zero horizon response of interest rates to US term spread shocks} \label{fig:histRtoTS}
 	\includegraphics[trim={0.0cm 0.5cm 0.0cm 2.0cm},width=1\textwidth]{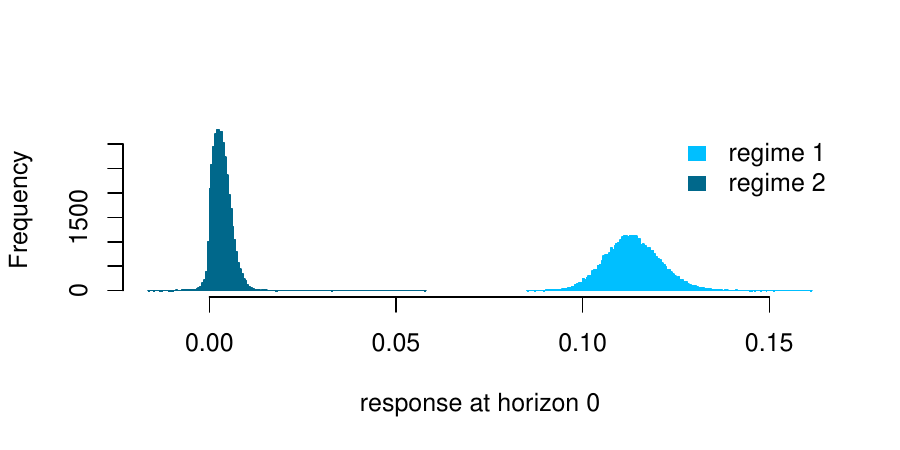}
	\\	{\footnotesize Note: Figure shows posterior distribution of zero horizon response of interest rates to a negative spread shock of 10 basis points for regime (light color) and regime 2 (dark color). For regime 2 only those draws are taken where TR with m was selected.}
\end{figure}

Money increases in reaction to a negative spread shock in the second regime and decreases in the first. The effect lasts in mid-horizons in both regimes. The positive reaction of money in the second regime -- which is predominate during unconventional monetary policy periods -- aligns with the interpretation of the term spread shock as mirroring quantitative easing \citep[see][]{Baumeister2013,Feldkircher2016,Liu2017}.

\subsection{What does monetary policy do?}

\noindent How effective was the Fed in reaching its dual mandate of price stability and economic growth over time? To quantify this, we show the cumulative effects of monetary policy shocks over one year at each point in time for industrial production and inflation in Figure~\ref{fig:EFF}.

\begin{figure}[t]
	\caption{Cumulative effects of the last year's monetary policy shocks over time} \label{fig:EFF}
	\subfloat[$y$]{\includegraphics[trim={0.0cm 1.0cm 0.5cm 2.0cm},width=0.5\textwidth]{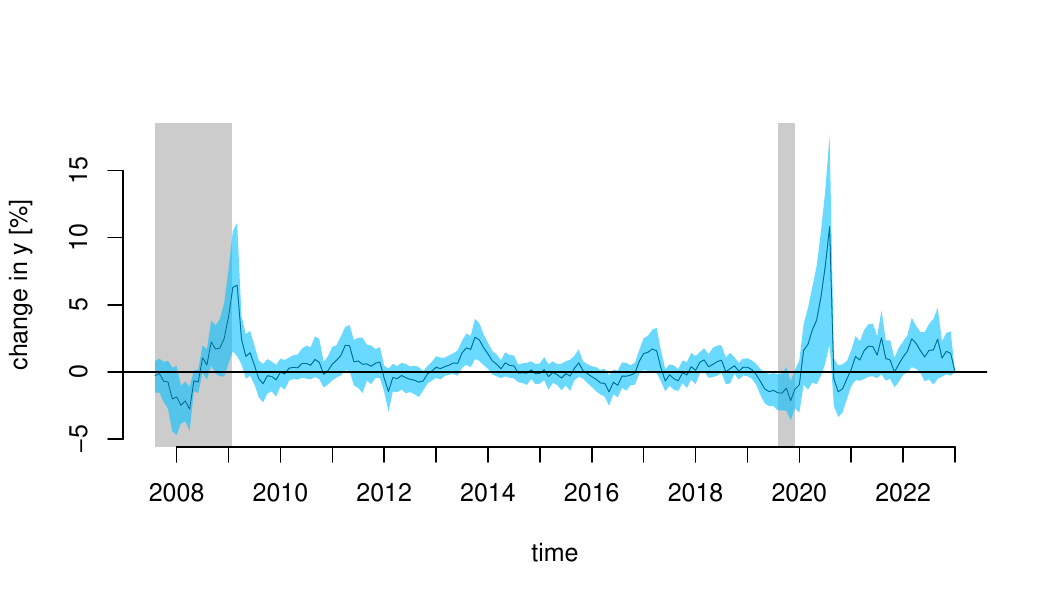}}
	\subfloat[$\pi$]{\includegraphics[trim={0.0cm 1.0cm 0.5cm 2.0cm},width=0.5\textwidth]{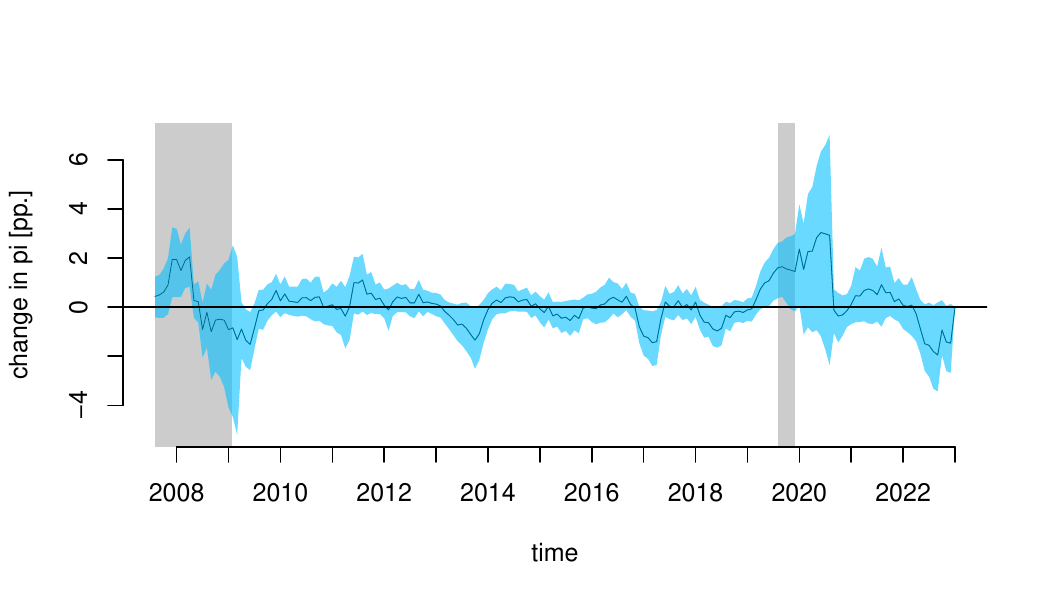}}  \\
		{\footnotesize Note: Figure shows the impulse responses of industrial production and inflation to monetary policy shocks cumulated over one year (solid lines) together with the 68\% highest posterior density set, starting in January 2008. The gray shaded areas mark the NBER recessions during the reported period. }
\end{figure}

Post GFC and COVID-19 crisis monetary policy had buffering effects. After the crisis, expansionary monetary policy lead to sharp increases in economic activity reaching 6.5\% in August 2009 and 10.8\% in January 2021. The effects are clearly different from zero then. Compared to these episodes, monetary policy shocks lead to minor changes in industrial production in the remaining periods confirming the impulse response analysis showing no effect for the first regime.

The cumulative responses of inflation to monetary policy shocks are positive and clearly different from zero in the aftermaths of the GFC and COVID-19 crisis. Mitigating the recessions comes at the cost of higher inflation in the short term. For a year, starting in mid-2009, monetary policy had a disinflationary effect, reaching up to -1.5 pp in November 2009. The effect seems not statistically different from zero, though. During the period starting in 2010 up to early 2020 inflation reacted little to recent monetary policy shocks. The sharp inflationary effect in late 2020 reached 3 pp. Finally, tightening of the monetary policy after 2022 was deflationary with the strongest cumulative response at the level of -2 pp in February 2023.


Overall, our model indicates that in the last fifteen years, US monetary policy targeted mainly sharp economic crises providing a cushioning effect on the economy. Their actions were strongly expansionary in the aftermaths of the global financial and COVID-19 crises. The stimuli came with a delay and put substantial inflationary pressure. In normal times, the central bank intervenes to a minor extent. 

\subsection{What if  the second regime never emerged?}

\noindent The interventions of the Fed were particularly essential after the GFC and COVID-19 crisis. Since systematic monetary policy predominately can be characterized by the TR augmented with money during these periods, we further investigate the time-varying effects of these policy actions. Specifically, we ask the question of how economic activity, inflation, short-term interest rates, term spread, money, and stock prices would have evolved if the monetary policy as conducted in the second regime never emerged. 

To that end, we generate a counterfactual scenario in which the economic developments propagate according to the monetary policy rule from the first regime and compare it to the actual values predicted by our unaltered model. Consequently, we first isolate standardized shocks deprived of the mean, volatility, and contemporaneous relationships and propagate them according to the parameter values where the third row of the structural matrix -- the monetary policy reaction function -- and the corresponding volatility of the log-volatility parameter include the values from the first regime only. Subsequently, we calculate counterfactual values of the economic outcomes.

Figure~\ref{fig:CFA} shows the counterfactual values in comparison to the the actual values of the time series for January 2008 until June 2023. If monetary policy as conducted in the second regime did not occur, inflation would be on average 1 pp higher after the GFC as reported in Figure~\ref{fig:CFA}(b). This result is in line with the stronger impulse response of inflation in the first regime combined with on average negative monetary policy shocks for the considered period. We also link this result to the missing inflation interpretation by \cite{bobeica2019}. In our model, if the second regime in the monetary policy reaction function did not emerge and inflation was forecasted using the parameter values and identification from the first regime, the model would predict a much higher inflation rate after 2008 than that which eventuated.

\begin{figure}[t!]
	\caption{Counterfactual and actual values of the time series} \label{fig:CFA}
	\subfloat[$y$]{\includegraphics[trim={0.0cm 1.0cm 0.0cm 2.0cm},width=0.5\textwidth]{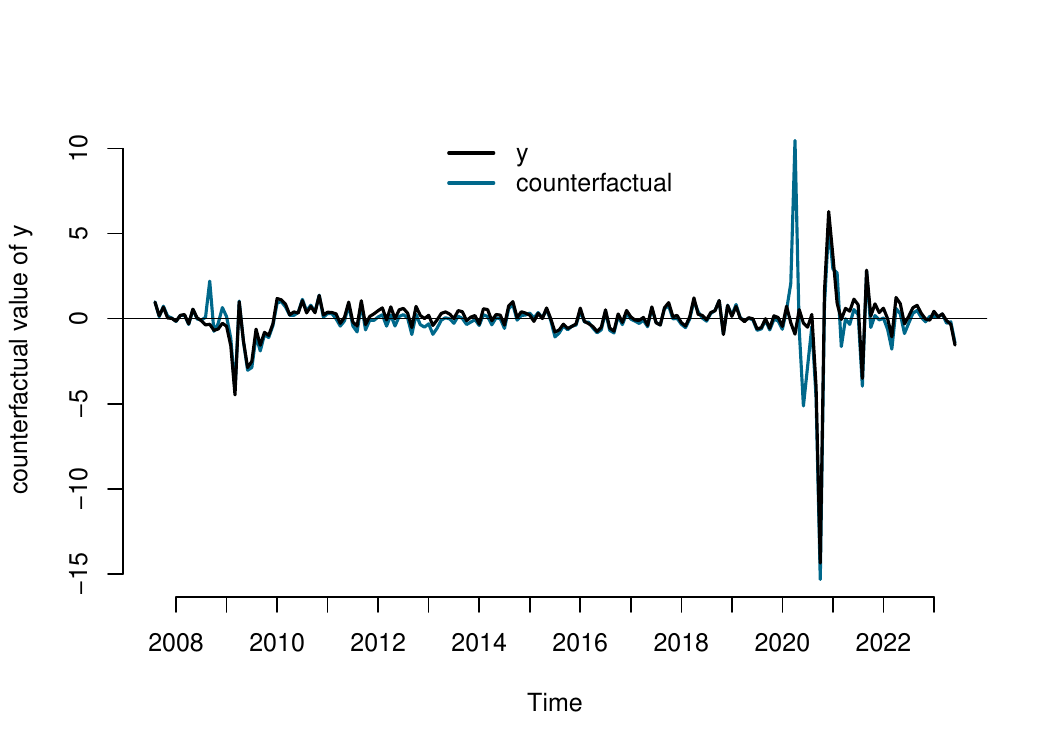}}
	\subfloat[$\pi$]{\includegraphics[trim={0.0cm 1.0cm 0.0cm 2.0cm},width=0.5\textwidth]{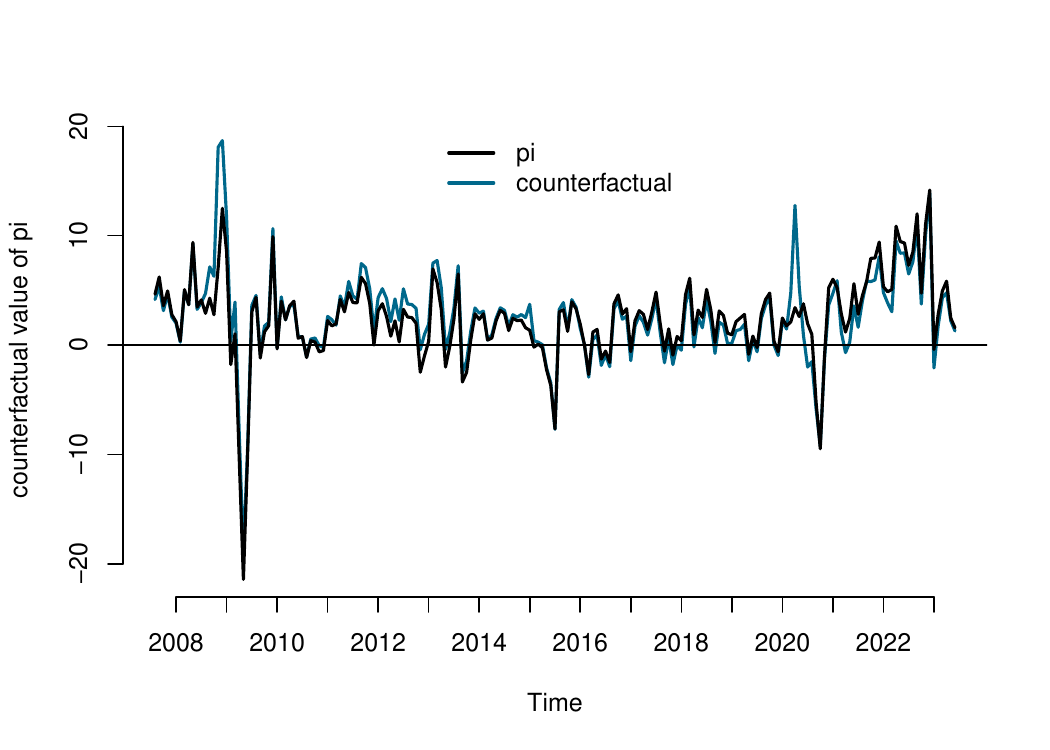}} \\
	\subfloat[$R$]{	\includegraphics[trim={0.0cm 1.0cm 0.0cm 2.0cm},width=0.5\textwidth]{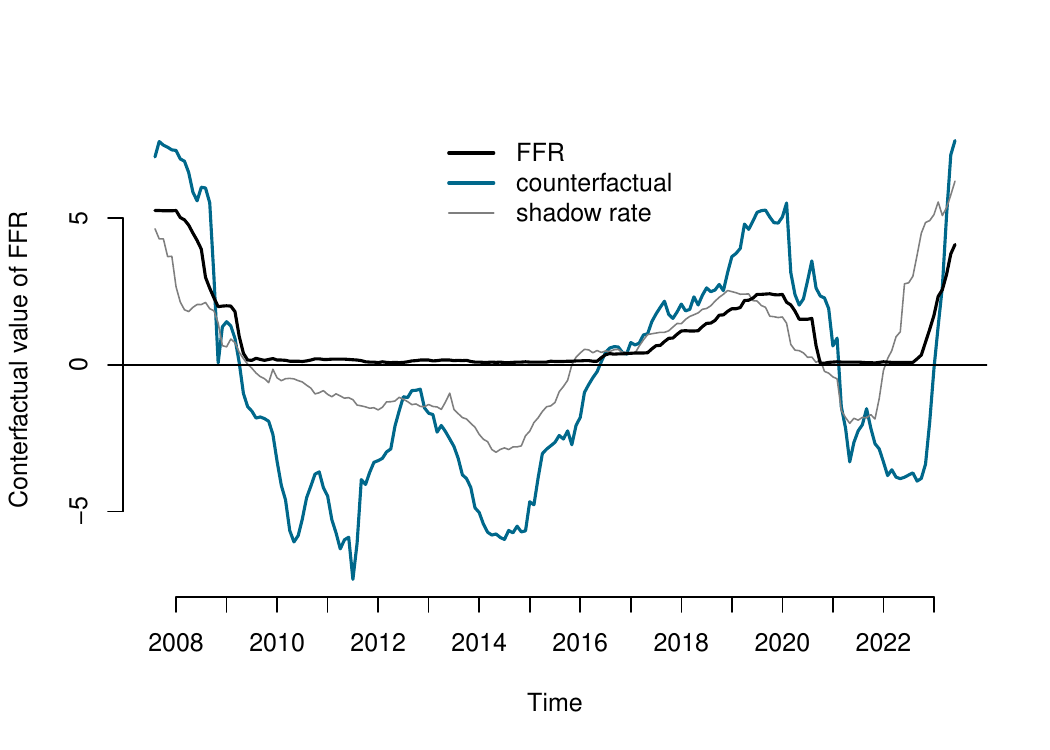}} 
	\subfloat[$TS$]{	\includegraphics[trim={0.0cm 1.0cm 0.0cm 2.0cm},width=0.5\textwidth]{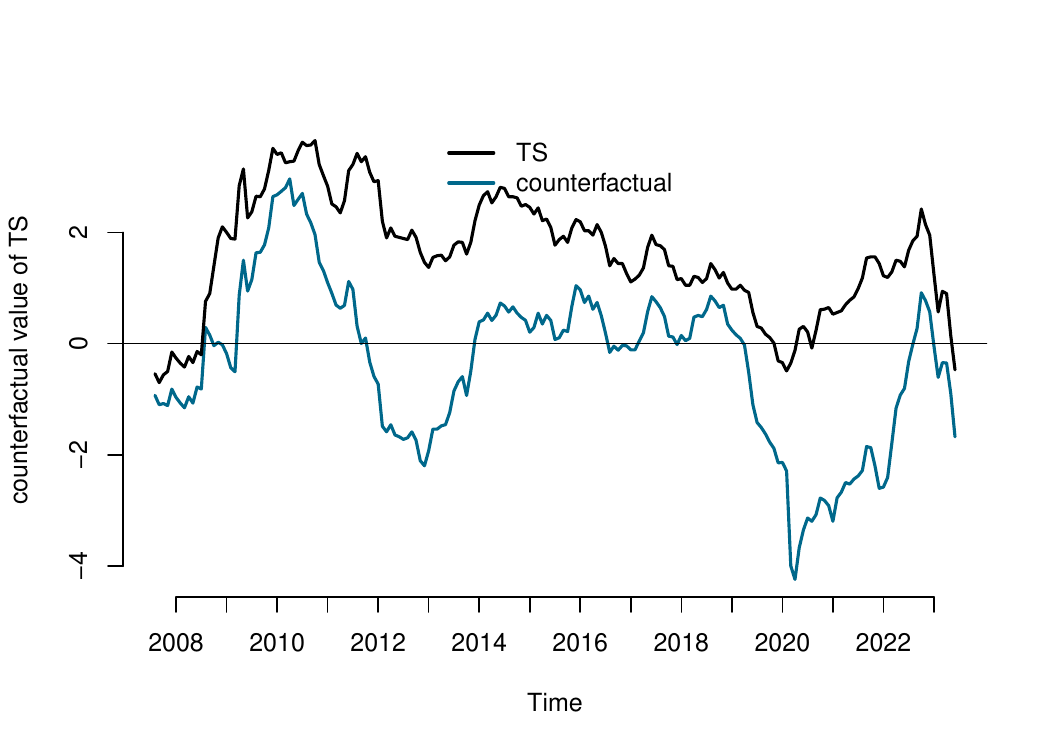}}\\
	\subfloat[$m$]{	\includegraphics[trim={0.0cm 1.0cm 0.0cm 2.0cm},width=0.5\textwidth]{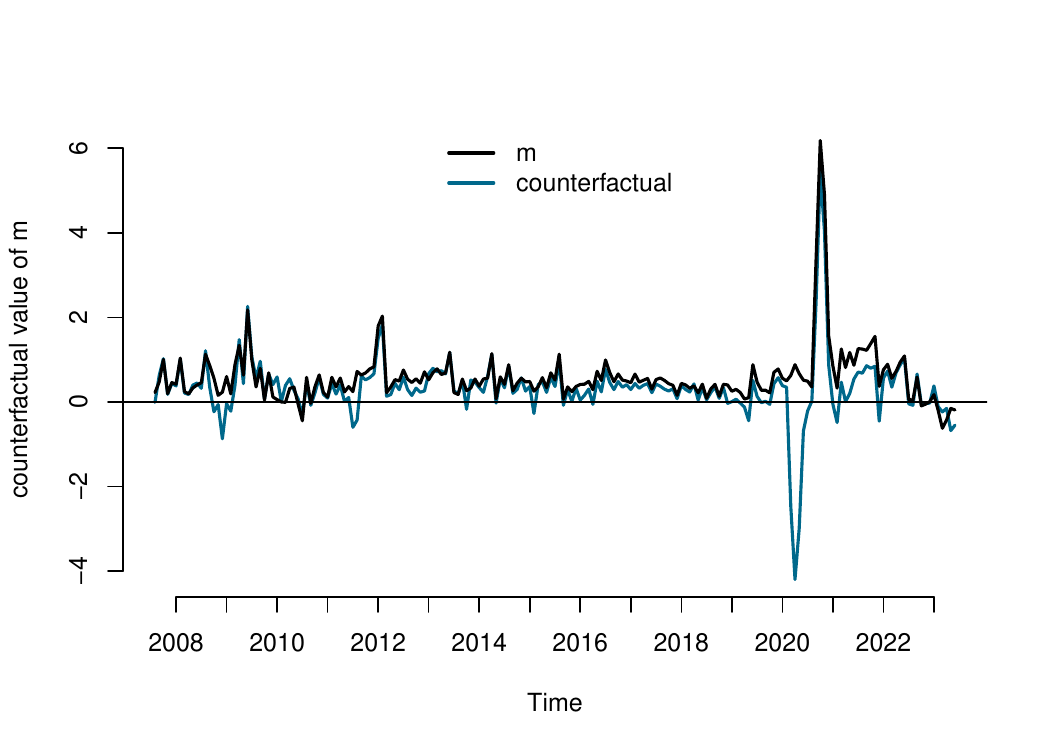}}
	\subfloat[$sp$]{	\includegraphics[trim={0.0cm 1.0cm 0.0cm 2.0cm},width=0.5\textwidth]{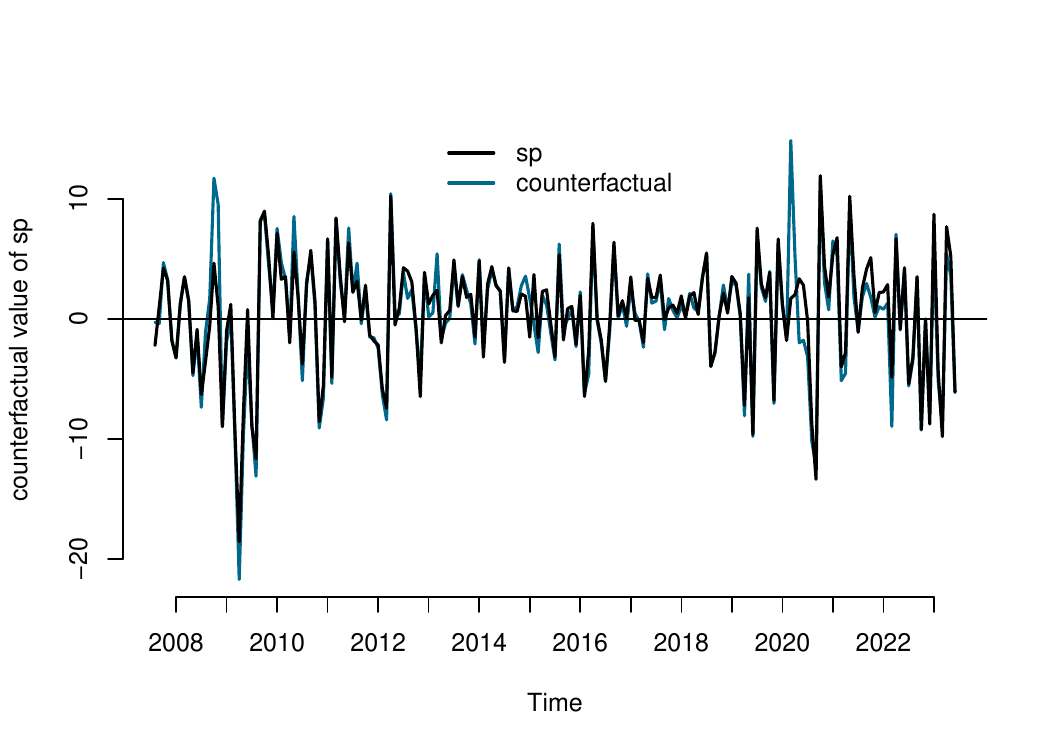}}
	\\	{\footnotesize Note: Figure shows the actual values of the six time series (dark lines) together with the counterfactual value (light lines) starting in January 2008. The analysis transforms $y_t$, $m_t$, and $sp_t$ to their growth rates.}
\end{figure}

Short term interest rates would be on average 1.4 pp lower. During zero lower bound periods including the COVID-19 pandemic, the counterfactual value of the federal funds rates is negative. Indeed, albeit more volatile, it follows closely the movements in the shadow rate represented by the gray line in Figure~\ref{fig:CFA}(c). The term spread would be considerably lower, especially from 2011 to 2016, on average by around 1.2 pp. A~similar result is reported by \cite{Baumeister2013} and, as the authors argue, it does not contradict the stabilizing effects of the monetary policy during this period. This is supported by the lower standard deviations of the term spread shock in regime 2 reported in Table~\ref{tab:regimeparam}. 

Money and industrial production growth rates as well as stock returns are not substantially different from the actual values except for specific episodes. For instance, in the second quarter of 2020, the counterfactual value of money growth is much lower and those of stock returns and industrial production growth are higher than the actual values. Money growth is also visibly lower in the counterfactual scenario throughout 2021. Summarizing, our counterfactual analysis provides evidence that monetary policy as conducted in the second regime was effective in lowering inflation, providing liquidity in the time of stress, and stabilizing the bond yield curve after 2008.

\section{Conclusions} \label{sec:conclusions}

\noindent We propose a new Bayesian structural vector autoregression which allows for time-varying identification. Our model facilitates a data-driven search of regime-specific exclusion restriction patterns. It features persistent Markov-switching regimes in the structural matrix and the volatility of the log-volatility as well as structural shock variances evolving at a much higher frequency following stochastic volatility. These components allow us to study time-variation in monetary policy identification, time-varying impulse responses, and provide an easy way to check identification via heteroskedasticity within each regime as a validity condition of regime-specific identification.

We find solid and robust support for time-variation in US monetary policy shock identification. Time-varying identification is essential to characterise unconventional monetary policy in the aftermath of the global financial and COVID-19 crises. Moreover, correctly specifying the model in the state-dependent setting enables sharp identification in each of the Markov-switching regimes, and provides wide interpretability of various reported outcomes.

Our results show that in the last fifteen years, monetary policy was conducted unconventionally and acted mainly as a buffer against the adverse economic effects of the financial crisis and the COVID-19 pandemic. The Fed provided stimuli to economic activity and liquidity to the market. At the same time, the central bank effectively curbs inflation  which in the absence of extraordinary means, would be on average one pp higher.

\bibliographystyle{elsarticle-harv}
\bibliography{bib}

\newpage
\appendix

\setstretch{1.5}
\section{Gibbs sampler} \label{sec-A:sampler}

\noindent For simplicity, we write the VAR model in compact form
\begin{align}
\mathbf{y}_t &= \mathbf{A}\mathbf{x}_{t}  + \boldsymbol{\varepsilon}_t
\end{align}
where the $N\times (Np+d)$-matrix  $\mathbf{A}=\begin{bmatrix}\mathbf{A}_1 &\dots & \mathbf{A}_p & \mathbf{A}_d\end{bmatrix}$ collects autoregressive and constant parameters and  the $(Np+d)$-vector $\mathbf{x}_t = \begin{bmatrix} \mathbf{y}_{t-1}' & \dots & \mathbf{y}_{t-p}' & \mathbf{d}_{t}'  \end{bmatrix}'$ contains lagged endogenous variables and constant terms. 

\subsection*{Sampling parameters in the structural matrix}

\noindent  To estimate the parameters in the structural matrix, we use a Gibbs sampler. The full conditional posterior distribution of the unrestricted elements in regime $m$, equation $n$, and TVI component $k$  is generalised-normal, proportional to
\begin{align}
	|\det\left( \mathbf{B}_{m.\mathbf{k}} \right)|^{T_m} \exp \left\{ -\frac{1}{2}
	 \mathbf{b}_{n.m.k} \mathbf{V}_{n.m.k} \overline{\boldsymbol{\Omega}}_{B.nm}^{-1} \mathbf{V}_{n.m.k}'\mathbf{b}_{n.m.k}'  \right\} \label{eq:postb}
\end{align} 
where $\mathbf{B}_{m.\mathbf{k}}$ is the matrix from regime $m$ and TVI component $k$, with scale matrix 
\begin{align}
	\overline{\boldsymbol{\Omega}}_{B.nm}^{-1} &= \gamma_{B.n}^{-1}\mathbf{I}_N + \sum_{t: s_t = m}
	(\mathbf{y}_t - \mathbf{A}\mathbf{x}_{t})(\mathbf{y}_t - \mathbf{A}\mathbf{x}_{t})'
\end{align}
where $T_m$ is the number of observations in regime $m$. Within each regime, we sample the structural matrix and the TVI indicators jointly.
\begin{enumerate}
	\item Sample parameters of the structural matrix: $\mathbf{B}$ row-by-row following the algorithm by \cite{WaggonerZha2003} by sampling $K$ vectors $\mathbf{b}_{n.m.k}$ from \eqref{eq:postb} given $\kappa(m)=k$ for all $k\in\{1,\dots,K\}$. 
	\item Sample TVI indicators: by computing the conditional posterior probabilities of each of these TVI components, $\overline{p}_{n.m.k}$,  proportional to the product of the likelihood function, $L(\theta\mid \mathbf{Y}_T)$, and the prior distributions:
	\begin{align}
		\overline{p}_{n.m.k} &\propto L(\theta\mid \mathbf{Y}_T,  \kappa(m)=k)p\left(\mathbf{B}_{n.m.k}\mid \gamma_B, \kappa(m)=k\right)p\left(\kappa(m) = k\right).
	\end{align}
	Sample the TVI indicator from the multinomial distribution using the probabilities $\overline{p}_{n.m.k}$ for $k=1,\dots,K$, and return the draw from the joint posterior $\left(\mathbf{b}_{n.m.k}, k\right)$.
	
	\item  Sample structural hyper-parameters: $\gamma_{B.n}$, $\underline{s}_{B.n}$, and $\underline{s}_{\gamma_B}$ from their respective full conditional posterior distributions:
	\begin{align}
		\gamma_{B.n}\mid \mathbf{B}_{n.m.k}, k, \underline{s}_{B.n} &\sim\mathcal{IG}2\left( \underline{s}_{B.n} + \sum_m \mathbf{b}_{n.m.k_n}\mathbf{b}_{n.m.k_n}', \quad\underline{\nu}_B + \sum_m r_{n.m.k_n} \right)\\
	\underline{s}_{B.n} \mid \gamma_B, \underline{s}_{\gamma_B} &\sim\mathcal{G}\left((\underline{s}_{\gamma_B}^{-1} + (2\gamma_{B.n})^{-1})^{-1},\underline{\nu}_{\gamma_B} + 0.5\underline{\nu}_{B}\right)\\
	\underline{s}_{\gamma_B} \mid \underline{s}_{B} &\sim\mathcal{IG}2\left(\underline{s}_{s_B} + 2\sum_{n=1}^N\underline{s}_{B.n}, \underline{\nu}_{s_B} + 2N\underline{\nu}_{\gamma_B}\right)
	\end{align}
\end{enumerate}
If TVI is not implemented in a particular row, sample $\mathbf{b}_{n.m}$ bypassing Step 2 of the algorithm.

\subsection*{Sampling of the remaining parameters}
\noindent Estimation of the remaining is implemented by the Gibbs sampler. Our emphasis was put on the selection of the most efficient sampling techniques that would facilitate estimation in larger systems of variables for which the dimension of the parameter space grows rapidly. 

The rows of the transition probabilities matrix, $\mathbf{P}$, and the initial state probabilities vector, $\boldsymbol{\pi}_0$, are all sampled from independent $M$-variate Dirichlet full conditional posterior distributions as in \cite{Fruhwirth-Schnatter2006}. The sampling procedure for the Markov process realisations follows the forward-filtering backward-sampling algorithm proposed by \cite{chib1996calculating}.



\subsubsection*{Sampling autoregressive parameters}

\noindent We follow \cite{ChanKoopYu2022} and sample the autoregressive parameters of the $\mathbf{A}$ matrix efficiently row-by-row, where the $n\textsuperscript{th}$ row is denoted by $[\mathbf{A}]_{n\cdot}$. Denote by $\mathbf{A}_{n=0}$ the $\mathbf{A}$ matrix with its $n\textsuperscript{th}$ row  set to zeros. To derive the sampler for the row of the autoregressive matrix, rewrite the structural form equation~\eqref{eq:sf} in an equivalent form as:
\begin{align}
\mathbf{B}_{m.k}\left(\mathbf{y}_t  - \mathbf{A}_{n=0}\mathbf{x}_{t}\right)&= \left([\mathbf{B}_{m.k}]_{\cdot n} \otimes \mathbf{x}_t'\right)[\mathbf{A}]'_{n\cdot}  + \boldsymbol{\varepsilon}_t \label{eq:sfsampleA}
\end{align}
Define an $n\times 1$ vector $\mathbf{z}_t^n = \mathbf{B}_{m.k}\left(\mathbf{y}_t  - \mathbf{A}_{n=0}\mathbf{x}_{t}\right)$ and an $N\times (Np+d)$ matrix $\mathbf{Z}_t^n = \left([\mathbf{B}_{m.k}]_{\cdot n} \otimes \mathbf{x}_t'\right)$ and rewrite equation~\eqref{eq:sfsampleA} as:
\begin{align}
\mathbf{z}_t^n&= \mathbf{Z}_t^n[\mathbf{A}]'_{n\cdot}  + \boldsymbol{\varepsilon}_t,
\end{align}
where the shocks follow the normal distribution defined in equation~\eqref{eq:sfshock}. This results in the multivariate normal full conditional posterior distribution for the rows of $\mathbf{A}$:
\begin{align}
[\mathbf{A}]'_{n\cdot} \mid \mathbf{Y}_T, [\mathbf{A}]_{n=0}, \mathbf{B}_{m.k},\boldsymbol{\sigma}^2_1, \dots\boldsymbol{\sigma}^2_T, \gamma_{A.n} &\sim \mathcal{N}_{Np+d}\left( \overline{\boldsymbol{\Omega}}_{A.n} \overline{\mathbf{m}}_{A.n}', \overline{\boldsymbol{\Omega}}_{A.n} \right)\\[1ex]
\overline{\boldsymbol{\Omega}}_{A.n}^{-1} &= \gamma_{A.n}^{-1}\underline{\boldsymbol{\Omega}}_A^{-1} + \sum_t \mathbf{Z}_t^{n\prime} \diag\left(\boldsymbol{\sigma}_t^2\right)^{-1}\mathbf{Z}_t^n \\
\overline{\mathbf{m}}_{A.n}'&= \gamma_{A.n}^{-1}\underline{\boldsymbol{\Omega}}_A^{-1}\underline{\mathbf{m}}_{A.n}' + \sum_t \mathbf{Z}_t^{n\prime} \diag\left(\boldsymbol{\sigma}_t^2\right)^{-1}\mathbf{z}_t^n 
\end{align}
Finally, the hyper-parameters of the autoregressive shrinkage hierarchical prior are sampled from their respective full conditional posterior distributions:
\begin{align}
\gamma_{A.n}\mid [\mathbf{A}]_{n\cdot}, \underline{s}_{A.n} &\sim\mathcal{IG}2\left( \underline{s}_{A.n} +  \left([\mathbf{A}]_{n\cdot} - \underline{\mathbf{m}}_{A.n}\right)\underline{\boldsymbol{\Omega}}_{A.n}^{-1}\left([\mathbf{A}]_{n\cdot} - \underline{\mathbf{m}}_{A.n}\right)', \quad\underline{\nu}_{A} + Np+d \right)\\
	\underline{s}_{A.n} \mid \gamma_{A.n}, \underline{s}_{\gamma_A} &\sim\mathcal{G}\left((\underline{s}_{\gamma_A}^{-1} + (2\gamma_{A.n})^{-1})^{-1},\underline{\nu}_{\gamma_A} + 0.5\underline{\nu}_{A}\right)\\
	\underline{s}_{\gamma_A} \mid \underline{s}_{A} &\sim\mathcal{IG}2\left(\underline{s}_{s_A} + 2\sum_{n=1}^N\underline{s}_{A.n}, \underline{\nu}_{s_A} + 2N\underline{\nu}_{\gamma_A}\right)
	\end{align}

\subsubsection*{Sampling stochastic volatility parameters} 

\noindent Gibbs sampler is facilitated using the auxiliary mixture sampler proposed by \cite{Omori2007}. 

Specify the $n\textsuperscript{th}$ structural shock as:
\begin{align}
u_{n.t} = \exp\left\{\frac{1}{2}\omega_{n}(s_t) h_{n.t}\right\} z_{n.t},
\end{align}
where $z_{n.t}$ is a standard normal innovation. Transform this equation by squaring and taking the logarithm of both sides obtaining:
\begin{align}\label{eq:loglin}
\widetilde{u}_{n.t} = \omega_{n}(s_t) h_{n.t} + \widetilde{z}_{n.t},
\end{align}
where $\widetilde{u}_{n.t} = \log u_{n.t}^2$ and $\widetilde{z}_{n.t}=\log z_{n.t}^2$. The distribution of $\widetilde{z}_{n.t}$ is $\log\chi^2_1$. This non-standard distribution is approximated precisely by a mixture of ten normal distributions defined by \cite{Omori2007}. This mixture of normals is specified by $q_{n.t}=1,\dots,10$ -- the mixture component indicator for the $n\textsuperscript{th}$ equation at time $t$, the normal component probability $\pi_{q_{n.t}}$, mean $\mu_{q_{n.t}}$, and variance $\sigma^2_{q_{n.t}}$. The latter three parameters are fixed and given in \cite{Omori2007}, while $q_{n.t}$ augments the parameter space and is estimated. Its prior distribution is multinomial with probabilities $\pi_{q_{n.t}}$. 

Define $T\times1$ vectors: $\mathbf{h}_n = \begin{bmatrix}h_{n.1}&\dots&h_{n.T}\end{bmatrix}'$ collecting the log-volatilities,  $\mathbf{q}_n=\begin{bmatrix}q_{n.1}&\dots&q_{n.T}\end{bmatrix}'$ collecting the realizations of $q_{n.t}$ for all $t$, $\boldsymbol\mu_{\mathbf{q}_n}=\begin{bmatrix}\mu_{q_{n.1}}&\dots&\mu_{q_{n.T}}\end{bmatrix}'$, and $\boldsymbol\sigma^2_{\mathbf{q}_n}=\begin{bmatrix}\sigma^2_{q_{n.1}}&\dots&\sigma^2_{q_{n.T}}\end{bmatrix}'$ collecting the $n\textsuperscript{th}$ equation auxiliary mixture means and variances, $\widetilde{\mathbf{U}}_n = \begin{bmatrix} \widetilde{u}_{n.1}&\dots&\widetilde{u}_{n.T} \end{bmatrix}'$, and $\boldsymbol\omega_{n} = \begin{bmatrix}\omega_{n}(s_1)&\dots& \omega_{n}(s_T)\end{bmatrix}'$ collecting the volatility of the volatility parameters according to their current time assignment based on the sampled realizations of the Markov process. Whenever a subscript on these vectors is extended by the Markov process' regime indicator $m$ it means that the vector contains only the $T_m$ observations for $t$ such that $s_t = m$, e.g., $\widetilde{\mathbf{U}}_{n.m}$. Finally, define a $T\times T$ matrix $\mathbf{H}_{\rho_n}$ with ones on the main diagonal, value $-\rho_n$ on the first subdiagonal, and zeros elsewhere, and a $T_m\times T_m$ matrix $\mathbf{H}_{\rho_n.m}$ with rows and columns selected according to the Markov state allocations such that $s_t = m$.
Sampling latent volatilities $\mathbf{h}_n$ proceeds independently for each $n$ from the following $T$-variate normal distribution parameterized following \cite{chib2006b} and \cite{ChanJeliazkov2009} in terms of its precision matrix $\overline{\mathbf{V}}^{-1}_{\mathbf{h}_n}$ and location vector $\overline{\mathbf{h}}_n$ as:

\begin{align}
\mathbf{h}_n \mid \mathbf{Y}_T, \mathbf{s}_n, \mathbf{q}_n, \mathbf{B}, \mathbf{A}, \boldsymbol\omega_n, \rho_n &
\sim\mathcal{N}_T\left( \overline{\mathbf{V}}_{\mathbf{h}_n}\overline{\mathbf{h}}_n,{} \overline{\mathbf{V}}_{\mathbf{h}_n}\right)\\
\overline{\mathbf{V}}_{\mathbf{h}_n}^{-1} &= \diag\left(\boldsymbol\omega_n^2\right)\diag\left(\boldsymbol\sigma^{-2}_{\mathbf{q}_n}\right) + \mathbf{H}_{\rho_n}'\mathbf{H}_{\rho_n} \\
\overline{\mathbf{h}}_n &= \diag\left(\boldsymbol\omega_n\right)\diag\left(\boldsymbol\sigma^{-2}_{\mathbf{q}_n}\right)\left(\widetilde{\mathbf{u}}_n - \boldsymbol\mu_{\mathbf{q}_n}\right)
\end{align}
The precision matrix, $\overline{\mathbf{V}}_{\mathbf{h}_n}^{-1}$, is tridiagonal, which greatly leads to a simulation smoother proposed by \cite{mccausland2011simulation}.

The regime-dependent volatility of the volatility parameters are sampled independently from the following normal distribution:
\begin{align}
\omega_n(m)\mid\mathbf{Y}_m, \mathbf{s}_{n.m}, \mathbf{s}_{n.m}, \mathbf{h}_{n.m}, \sigma_{\omega_n}^2 
&\sim\mathcal{N}\left(\overline{v}_{\omega_{n.m}}\overline{\omega}_{n.m},{} \overline{v}_{\omega_{n.m}}\right)\label{eq:postomega1}\\
\overline{v}_{\omega_{n.m}}^{-1} &= \mathbf{h}_{n.m}'\diag\left(\boldsymbol\sigma_{\mathbf{q}_n.m}^{-2}\right)\mathbf{h}_{n.m} + \sigma_{\omega_n}^{-2}\label{eq:postomega2} \\
\overline{\omega}_n &= \mathbf{h}_{n.m}'\diag\left(\boldsymbol\sigma_{\mathbf{q}_n.m}^{-2}\right)\left(\widetilde{\mathbf{u}}_{n.m} - \boldsymbol\mu_{\mathbf{q}_n.m}  \right)\label{eq:postomega3}
\end{align}

The autoregressive parameters of the SV equations are sampled independently from the truncated normal distribution using the algorithm proposed by \cite{robert1995simulation}:
\begin{align}
\rho_n\mid \mathbf{Y}_T, \mathbf{h}_n \sim\mathcal{N}\left( 
\left(\sum_{t=0}^{T-1}h_{n.t}^2 \right)^{-1} \left(\sum_{t=1}^{T}h_{n.t}h_{n.t-1} \right) ,{}\left(\sum_{t=0}^{T-1}h_{n.t}^2 \right)^{-1}\right)\mathcal{I}\left(|\rho_n|<1\right).\label{eq:samplerho}
\end{align}

The prior variances of parameters $\omega_{n}(m)$, $\sigma_{\omega_n}^2$, are \emph{a posteriori}  independent and sampled from the following generalized inverse Gaussian full conditional posterior distribution:
\begin{align}
\sigma_{\omega_n}^2\mid\mathbf{Y}_T, \omega_{n}(1), \dots, \omega_{n}(M) &\sim\mathcal{GIG}\left( M\underline{A}-\frac{1}{2},{} \sum_m \omega^2_{n}(m),{} \frac{2}{\underline{S}} \right)
\end{align}

The auxiliary mixture indicators $q_{n.t}$ are each sampled independently from a~multinomial distribution with the probabilities proportional to the product of the prior probabilities $\pi_{q_{n.t}}$ and the conditional likelihood function.

Finally, proceed to the ancillarity-sufficiency interweaving sampler proposed by \cite{Kastner2014}. Our implementation proceeds as follows: Having sampled random vector $\mathbf{\mathbf{h}}_n$ and parameters $\omega_{n}(m)$, compute the parameters of the centered parameterization $\tilde{h}_{n.t} = \omega_{n}(m) h_{n.t}$ and $\sigma_{\upsilon_n}^2=\omega^2_n(m)$. Then, sample $\sigma_{\upsilon_n.m}^2$ from the generalised inverse Gaussian full conditional posterior distribution:
\begin{align}
\sigma_{\upsilon_n.m}^2\mid\mathbf{Y}_m, \tilde{\mathbf{h}}_{n.m}, \sigma_{\omega_n}^2 \sim \mathcal{GIG}\left(-\frac{T_m-1}{2},{} \tilde{\mathbf{h}}_{n.m}'\mathbf{H}_{\rho_n.m}'\mathbf{H}_{\rho_n.m}\tilde{\mathbf{h}}_{n.m},{} \sigma_{\omega_n}^{-2}\right)
\end{align}
using the algorithm introduced by \cite{hormann2014generating}. Resample $\rho_n$ using the full conditional posterior distribution from equation~\eqref{eq:samplerho} where the vector $\mathbf{h}_n$ is replaced by $\tilde{\mathbf{h}}_n$. Finally, compute $\omega_{n}(m)=\pm\sqrt{\sigma_{\upsilon_n.m}^2}$ and $h_{n.t}=\frac{1}{\omega_{n}(m)}\tilde{h}_{n.t}$ and return them as the MCMC draws for these parameters.

\end{document}